\numberwithin{equation}{section}
\theoremstyle{plain}
\newtheorem{theorem}{Theorem}[section]
\newtheorem{corollary}[theorem]{Corollary}
\newtheorem{lemma}[theorem]{Lemma}
\theoremstyle{definition}
\newtheorem{definition}[theorem]{Definition}
\theoremstyle{remark}
\newtheorem{remark}{Remark}[section]
\newtheorem{example}[theorem]{Example}
\begin{document}
\newcommand{\M}{\mathcal{M}}
\newcommand{\F}{\mathcal{F}}

\newcommand{\Teich}{\mathcal{T}_{g,N+1}^{(1)}}
\newcommand{\T}{\mathrm{T}}
\newcommand{\corr}{\bf}
\newcommand{\vac}{|0\rangle}
\newcommand{\Ga}{\Gamma}
\newcommand{\new}{\bf}
\newcommand{\define}{\def}
\newcommand{\redefine}{\def}
\newcommand{\Cal}[1]{\mathcal{#1}}
\renewcommand{\frak}[1]{\mathfrak{{#1}}}
\newcommand{\Hom}{\rm{Hom}\,}
\newcommand{\refE}[1]{(\ref{E:#1})}
\newcommand{\refCh}[1]{Chapter~\ref{Ch:#1}}
\newcommand{\refS}[1]{Section~\ref{S:#1}}
\newcommand{\refSS}[1]{Section~\ref{SS:#1}}
\newcommand{\refT}[1]{Theorem~\ref{T:#1}}
\newcommand{\refO}[1]{Observation~\ref{O:#1}}
\newcommand{\refP}[1]{Proposition~\ref{P:#1}}
\newcommand{\refD}[1]{Definition~\ref{D:#1}}
\newcommand{\refC}[1]{Corollary~\ref{C:#1}}
\newcommand{\refL}[1]{Lemma~\ref{L:#1}}
\newcommand{\refEx}[1]{Example~\ref{Ex:#1}}
\newcommand{\R}{\ensuremath{\mathbb{R}}}
\newcommand{\C}{\ensuremath{\mathbb{C}}}
\newcommand{\N}{\ensuremath{\mathbb{N}}}
\newcommand{\Q}{\ensuremath{\mathbb{Q}}}
\renewcommand{\P}{\ensuremath{\mathcal{P}}}
\newcommand{\Z}{\ensuremath{\mathbb{Z}}}
\newcommand{\kv}{{k^{\vee}}}
\renewcommand{\l}{\lambda}
\newcommand{\gb}{\overline{\mathfrak{g}}}
\newcommand{\dt}{\tilde d}     
\newcommand{\hb}{\overline{\mathfrak{h}}}
\newcommand{\g}{\mathfrak{g}}
\newcommand{\h}{\mathfrak{h}}
\newcommand{\gh}{\widehat{\mathfrak{g}}}
\newcommand{\ghN}{\widehat{\mathfrak{g}_{(N)}}}
\newcommand{\gbN}{\overline{\mathfrak{g}_{(N)}}}
\newcommand{\tr}{\mathrm{tr}}
\newcommand{\gln}{\mathfrak{gl}(n)}
\newcommand{\son}{\mathfrak{so}(n)}
\newcommand{\spnn}{\mathfrak{sp}(2n)}
\newcommand{\sln}{\mathfrak{sl}}
\newcommand{\sn}{\mathfrak{s}}
\newcommand{\so}{\mathfrak{so}}
\newcommand{\spn}{\mathfrak{sp}}
\newcommand{\tsp}{\mathfrak{tsp}(2n)}
\newcommand{\gl}{\mathfrak{gl}}
\newcommand{\slnb}{{\overline{\mathfrak{sl}}}}
\newcommand{\snb}{{\overline{\mathfrak{s}}}}
\newcommand{\sob}{{\overline{\mathfrak{so}}}}
\newcommand{\spnb}{{\overline{\mathfrak{sp}}}}
\newcommand{\glb}{{\overline{\mathfrak{gl}}}}
\newcommand{\Hwft}{\mathcal{H}_{F,\tau}}
\newcommand{\Hwftm}{\mathcal{H}_{F,\tau}^{(m)}}

\newcommand{\car}{{\mathfrak{h}}}    
\newcommand{\bor}{{\mathfrak{b}}}    
\newcommand{\nil}{{\mathfrak{n}}}    
\newcommand{\vp}{{\varphi}}
\newcommand{\bh}{\widehat{\mathfrak{b}}}  
\newcommand{\bb}{\overline{\mathfrak{b}}}  
\newcommand{\Vh}{\widehat{\mathcal V}}
\newcommand{\KZ}{Kniz\-hnik-Zamo\-lod\-chi\-kov}
\newcommand{\TUY}{Tsuchia, Ueno  and Yamada}
\newcommand{\KN} {Kri\-che\-ver-Novi\-kov}
\newcommand{\pN}{\ensuremath{(P_1,P_2,\ldots,P_N)}}
\newcommand{\xN}{\ensuremath{(\xi_1,\xi_2,\ldots,\xi_N)}}
\newcommand{\lN}{\ensuremath{(\lambda_1,\lambda_2,\ldots,\lambda_N)}}
\newcommand{\iN}{\ensuremath{1,\ldots, N}}
\newcommand{\iNf}{\ensuremath{1,\ldots, N,\infty}}

\newcommand{\tb}{\tilde \beta}
\newcommand{\tk}{\tilde \varkappa}
\newcommand{\ka}{\kappa}
\renewcommand{\k}{\varkappa}
\newcommand{\ce}{{c}}

\newcommand{\Pif} {P_{\infty}}
\newcommand{\Pinf} {P_{\infty}}
\newcommand{\PN}{\ensuremath{\{P_1,P_2,\ldots,P_N\}}}
\newcommand{\PNi}{\ensuremath{\{P_1,P_2,\ldots,P_N,P_\infty\}}}
\newcommand{\Fln}[1][n]{F_{#1}^\lambda}
\newcommand{\tang}{\mathrm{T}}
\newcommand{\Kl}[1][\lambda]{\can^{#1}}
\newcommand{\A}{\mathcal{A}}
\newcommand{\U}{\mathcal{U}}
\newcommand{\V}{\mathcal{V}}
\newcommand{\W}{\mathcal{W}}
\renewcommand{\O}{\mathcal{O}}
\newcommand{\Ae}{\widehat{\mathcal{A}}}
\newcommand{\Ah}{\widehat{\mathcal{A}}}
\newcommand{\La}{\mathcal{L}}
\newcommand{\Le}{\widehat{\mathcal{L}}}
\newcommand{\Lh}{\widehat{\mathcal{L}}}
\newcommand{\eh}{\widehat{e}}
\newcommand{\Da}{\mathcal{D}}
\newcommand{\kndual}[2]{\langle #1,#2\rangle}
\newcommand{\cins}{\frac 1{2\pi\mathrm{i}}\int_{C_S}}
\newcommand{\cinsl}{\frac 1{24\pi\mathrm{i}}\int_{C_S}}
\newcommand{\cinc}[1]{\frac 1{2\pi\mathrm{i}}\int_{#1}}
\newcommand{\cintl}[1]{\frac 1{24\pi\mathrm{i}}\int_{#1 }}
\newcommand{\w}{\omega}
\newcommand{\ord}{\operatorname{ord}}
\newcommand{\res}{\operatorname{res}}
\newcommand{\nord}[1]{:\mkern-5mu{#1}\mkern-5mu:}
\newcommand{\codim}{\operatorname{codim}}
\newcommand{\ad}{\operatorname{ad}}
\newcommand{\Ad}{\operatorname{Ad}}
\newcommand{\supp}{\operatorname{support}}

\newcommand{\Fn}[1][\lambda]{\mathcal{F}^{#1}}
\newcommand{\Fl}[1][\lambda]{\mathcal{F}^{#1}}
\renewcommand{\Re}{\mathrm{Re}}

\newcommand{\ha}{H^\alpha}

\define\ldot{\hskip 1pt.\hskip 1pt}
\define\ifft{\qquad\text{if and only if}\qquad}
\define\a{\alpha}
\redefine\d{\delta}
\define\w{\omega}
\define\ep{\epsilon}
\redefine\b{\beta} \redefine\t{\tau} \redefine\i{{\,\mathrm{i}}\,}
\define\ga{\gamma}
\define\cint #1{\frac 1{2\pi\i}\int_{C_{#1}}}
\define\cintta{\frac 1{2\pi\i}\int_{C_{\tau}}}
\define\cintt{\frac 1{2\pi\i}\oint_{C}}
\define\cinttp{\frac 1{2\pi\i}\int_{C_{\tau'}}}
\define\cinto{\frac 1{2\pi\i}\int_{C_{0}}}
\define\cinttt{\frac 1{24\pi\i}\int_C}
\define\cintd{\frac 1{(2\pi \i)^2}\iint\limits_{C_{\tau}\,C_{\tau'}}}
\define\dintd{\frac 1{(2\pi \i)^2}\iint\limits_{C\,C'}}
\define\cintdr{\frac 1{(2\pi \i)^3}\int_{C_{\tau}}\int_{C_{\tau'}}
\int_{C_{\tau''}}}
\define\im{\operatorname{Im}}
\define\re{\operatorname{Re}}
\define\res{\operatorname{res}}
\redefine\deg{\operatornamewithlimits{deg}}
\define\ord{\operatorname{ord}}
\define\rank{\operatorname{rank}}
\define\fpz{\frac {d }{dz}}
\define\dzl{\,{dz}^\l}
\define\pfz#1{\frac {d#1}{dz}}

\define\K{\Cal K}
\define\U{\Cal U}
\redefine\O{\Cal O}
\define\He{\text{\rm H}^1}
\redefine\H{{\mathrm{H}}}
\define\Ho{\text{\rm H}^0}
\define\A{\Cal A}
\define\Do{\Cal D^{1}}
\define\Dh{\widehat{\mathcal{D}}^{1}}
\redefine\L{\Cal L}
\newcommand{\ND}{\ensuremath{\mathcal{N}^D}}
\redefine\D{\Cal D^{1}}
\define\KN {Kri\-che\-ver-Novi\-kov}
\define\Pif {{P_{\infty}}}
\define\Uif {{U_{\infty}}}
\define\Uifs {{U_{\infty}^*}}
\define\KM {Kac-Moody}
\define\Fln{\Cal F^\lambda_n}
\define\gb{\overline{\mathfrak{ g}}}
\define\G{\overline{\mathfrak{ g}}}
\define\Gb{\overline{\mathfrak{ g}}}
\redefine\g{\mathfrak{ g}}
\define\Gh{\widehat{\mathfrak{ g}}}
\define\gh{\widehat{\mathfrak{ g}}}
\define\Ah{\widehat{\Cal A}}
\define\Lh{\widehat{\Cal L}}
\define\Ugh{\Cal U(\Gh)}
\define\Xh{\hat X}
\define\Tld{...}
\define\iN{i=1,\ldots,N}
\define\iNi{i=1,\ldots,N,\infty}
\define\pN{p=1,\ldots,N}
\define\pNi{p=1,\ldots,N,\infty}
\define\de{\delta}

\define\kndual#1#2{\langle #1,#2\rangle}
\define \nord #1{:\mkern-5mu{#1}\mkern-5mu:}
\newcommand{\MgN}{\mathcal{M}_{g,N}} 
\newcommand{\MgNeki}{\mathcal{M}_{g,N+1}^{(k,\infty)}} 
\newcommand{\MgNeei}{\mathcal{M}_{g,N+1}^{(1,\infty)}} 
\newcommand{\MgNekp}{\mathcal{M}_{g,N+1}^{(k,p)}} 
\newcommand{\MgNkp}{\mathcal{M}_{g,N}^{(k,p)}} 
\newcommand{\MgNk}{\mathcal{M}_{g,N}^{(k)}} 
\newcommand{\MgNekpp}{\mathcal{M}_{g,N+1}^{(k,p')}} 
\newcommand{\MgNekkpp}{\mathcal{M}_{g,N+1}^{(k',p')}} 
\newcommand{\MgNezp}{\mathcal{M}_{g,N+1}^{(0,p)}} 
\newcommand{\MgNeep}{\mathcal{M}_{g,N+1}^{(1,p)}} 
\newcommand{\MgNeee}{\mathcal{M}_{g,N+1}^{(1,1)}} 
\newcommand{\MgNeez}{\mathcal{M}_{g,N+1}^{(1,0)}} 
\newcommand{\MgNezz}{\mathcal{M}_{g,N+1}^{(0,0)}} 
\newcommand{\MgNi}{\mathcal{M}_{g,N}^{\infty}} 
\newcommand{\MgNe}{\mathcal{M}_{g,N+1}} 
\newcommand{\MgNep}{\mathcal{M}_{g,N+1}^{(1)}} 
\newcommand{\MgNp}{\mathcal{M}_{g,N}^{(1)}} 
\newcommand{\Mgep}{\mathcal{M}_{g,1}^{(p)}} 
\newcommand{\MegN}{\mathcal{M}_{g,N+1}^{(1)}} 

\define \sinf{{\widehat{\sigma}}_\infty}
\define\Wt{\widetilde{W}}
\define\St{\widetilde{S}}
\newcommand{\SigmaT}{\widetilde{\Sigma}}
\newcommand{\hT}{\widetilde{\frak h}}
\define\Wn{W^{(1)}}
\define\Wtn{\widetilde{W}^{(1)}}
\define\btn{\tilde b^{(1)}}
\define\bt{\tilde b}
\define\bn{b^{(1)}}
\define \ainf{{\frak a}_\infty} 

%
\define\eps{\varepsilon}    
\newcommand{\e}{\varepsilon}
\define\doint{({\frac 1{2\pi\i}})^2\oint\limits _{C_0}
       \oint\limits _{C_0}}                            
\define\noint{ {\frac 1{2\pi\i}} \oint}   
\define \fh{{\frak h}}     
\define \fg{{\frak g}}     
\define \GKN{{\Cal G}}   
\define \gaff{{\hat\frak g}}   
\define\V{\Cal V}
\define \ms{{\Cal M}_{g,N}} 
\define \mse{{\Cal M}_{g,N+1}} 
\define \tOmega{\Tilde\Omega}
\define \tw{\Tilde\omega}
\define \hw{\hat\omega}
\define \s{\sigma}
\define \car{{\frak h}}    
\define \bor{{\frak b}}    
\define \nil{{\frak n}}    
\define \vp{{\varphi}}
\define\bh{\widehat{\frak b}}  
\define\bb{\overline{\frak b}}  
\define\KZ{Knizhnik-Zamolodchikov}
\define\ai{{\alpha(i)}}
\define\ak{{\alpha(k)}}
\define\aj{{\alpha(j)}}
\newcommand{\calF}{{\mathcal F}}
\newcommand{\ferm}{{\mathcal F}^{\infty /2}}
\newcommand{\Aut}{\operatorname{Aut}}
\newcommand{\End}{\operatorname{End}}
\newcommand{\laxgl}{\overline{\mathfrak{gl}}}
\newcommand{\laxsl}{\overline{\mathfrak{sl}}}
\newcommand{\laxso}{\overline{\mathfrak{so}}}
\newcommand{\laxsp}{\overline{\mathfrak{sp}}}
\newcommand{\laxs}{\overline{\mathfrak{s}}}
\newcommand{\laxg}{\overline{\frak g}}
\newcommand{\bgl}{\laxgl(n)}
\newcommand{\tX}{\widetilde{X}}
\newcommand{\tY}{\widetilde{Y}}
\newcommand{\tZ}{\widetilde{Z}}

\title[]{Inverse scattering method for Hitchin systems of types $B_n$, $C_n$, $D_n$, and their generalizations}
\author[O.K.Sheinman]{O.K.Sheinman}
\dedicatory{}
\maketitle
\begin{abstract}
We  give a solution of the Inverse Scattering Problem for integrable systems with a finite number degrees of freedom, admitting a Lax representation with spectral parameter on a Riemann surface. While conventional approaches deal with the systems with $GL(n)$ symmetry, we focus on the problems arising in the case of symmetry with respect to a semi-simple group. Our main results apply to Hitchin systems of the types $B_n$, $C_n$, $D_n$.
\end{abstract}
\tableofcontents
\section{Introduction}
In \cite{Klax}, I.M.Krichever introduced a wide class of Lax integrable systems with spectral parameter on a Riemann surface. Basic examples are given by $A_n$-type Hitchin and Calogero--Moser systems. Some classical integrable systems are contained there as well. In the above quoted work, Krichever has constructed the corresponding integrable hierarchies and their Hamiltonian theory, gave a general scheme of algebraic-geometrical integration of $GL(n)$ Hitchin systems by means his well-known theory of Baker--Akhieser functions.

In \cite{Kr_Sh_FAN}, there were constructed similar Lax operators for the classical groups $SO(2n)$, $SO(2n+1)$, $Sp(2n)$.  In the subsequent series of works summarized in \cite{Sh_DGr} we generalized the hierarchies originally introduced in \cite{Klax}, and their Hamiltonian theory, on the Lax systems of the above type where operators of the $(L,M)$-pair take values in the corresponding simple Lie algebras. In \cite{Shein_UMN2016} we generalized the above results on the case of arbitrary semi-simple Lie algebras. It turned out to be possible due to a new algebraic approach to the Lax operators in question developed in \cite{Sh_TG2016}. In \cite{Sh_FAN_2019,Sh_Bor}, we developed explicit separation of variables technique for Hitchin systems on hyperelliptic curves with arbitrary underlying symmetry algebra (group) in the class of complex semi-simple Lie algebras (groups, resp.).

The inverse scattering method appears as the only part of the programme initiated in \cite{Klax}, not yet done in presence of non-trivial group symmetry. It is the aim of this work to fill this gap.

In the present paper we consider the case of classical simple groups $SO(2n)$, $SO(2n+1)$, $Sp(2n)$. We set the problem as follows: to retrieve the operators of the Lax pair by spectral data. With this aim, we use a conventional scheme of algebraic-geometric integration due to Krichever \cite{Kr_UMN1977,Krichever_Itogi}, specified for the systems in question in \cite{Klax}. According to this scheme
\begin{equation}\label{E:LM}
      L=\widehat\Psi\Lambda{\widehat\Psi}^{-1},\ M=-\partial_t\widehat\Psi\cdot{\widehat\Psi}^{-1}
\end{equation}
where $L$ is the Lax matrix, $M$ is its counterpart in the Lax pair, $\Lambda$ is a spectrum of $L$ coming directly from the spectral curve, ${\widehat\Psi}$ is the matrix of Baker--Akhieser functions uniquely defined by spectral data.

In presence of a symmetry with respect to a semi-simple group $G$, there is the following difference with the conventional $GL(n)$ case :
\begin{itemize}
\item[1)]
${\widehat\Psi}$ is a $G$-valued function;

\item[2)]
Conventional spectral data is not sufficient for retrieving the $(L,M)$-pair. It is necessary to give explicitly the time-dependent poles of ${\widehat\Psi}$ corresponding to the \emph{Tyurin points} of the Lax matrix while in the conventional $GL(n)$ case the Tyurin points are zeroes of the Baker--Akhieser function.

\item[3)]
An expression for exponents of the Baker--Akhieser function is needed.
\end{itemize}
None of the problems 1, 2 arose in the $GL(n)$ case, and the problem 3 had a trivial solution. We resolve the first problem by applying a certain orthogonalization (resp., skew-orthogonalization) process to the matrix ${\widehat\Psi}$ obtained in a conventional way. The corresponding bilinear form comes directly from the assumption of invariance of the spectral curve with respect to a holomorphic involution (in turn, coming from the assumption that $L$ is a skew-symmetric (resp., infinitesimal symplectic)) matrix. The most delicate part of the proof is to preserve the form of Tyurin poles in course of the orthogonalization (proof of the \refT{exfor} below).

Methods of resolving the second problem are actually developed in \cite{KN_UMN1980}. We adopt the equation of motion of our analogs of the Tyurin parameters from \cite{Shein_UMN2016}.

To resolve the third problem, we reduce it to a calculation of gradients of invariants of the $L$-matrix in terms of its spectrum. This looks like a classical problem of linear algebra, however we don't know any reference on its solution.

In \refS{Lax_int_sys} we define integrable hierarchies in question via their Lax representations.

In \refS{sp_tr} we define the spectral data corresponding to a given Lax hierarchy. They include the spectral curve, the pole divisor and divisor of exponential singularities of the Baker--Akhieser function, and the form of the corresponding exponents. We stress again that these data are insufficient in the case $G$ is semi-simple. The reason is that in the semi-simple case the Tyurin points appear as unremovable poles while for $G=GL(n)$ they are zeroes of (determinant of) the Baker--Akhieser function.

In \refS{i_sp_tr} we formulate and prove main results of the paper, namely, the theorem of existence and uniqueness of the Baker--Akhieser function possessing the above listed special properties, and a theta-functional formula for it. We also give an explicit formula for the exponents of the Baker--Akhieser function.

\section{Lax integrable systems with spectral parameter on a Riemann surface}\label{S:Lax_int_sys}
\subsection{Lax equations}\label{SS:Lax_eq}
Let $\g$ be a semi-simple Lie algebra over $\C$, $\h$ be its Cartan subalgebra, and $h\in\h$ be such element that  $p_i=\a_i(h)\in\Z_+$ for every simple root $\a_i$ of $\g$. If we denote the root lattice of $\g$ by $\Z(R)$ where $R$ is the root system of $\g$, then $h$ belongs to the positive chamber of the dual lattice $\Z(R)^*$.

For $p\in\Z$ let $\g_p=\{ X\in\g\ |\ (\ad h)X=pX \}$, and $k=\max\{p\ |\ \g_p\ne 0\}$. Then the decomposition $\g=\bigoplus\limits_{i=-k}^{k}\g_p$ gives a $\Z$-grading on $\g$. Call $k$ a \emph{depth} of the grading. Obviously,
\[
 \g_p=\bigoplus\limits_{\substack{\a\in R\\ \a(h)=p}}\g_\a .
\]
Define also the following filtration on $\g$:  $\tilde\g_p=\bigoplus\limits_{q=-k}^p\g_q$. Then $\tilde\g_p\subset\tilde\g_{p+1}$ ($p\ge -k$), $\tilde\g_{-k}=\g_{-k},\ldots,\tilde\g_k=\g$, $\tilde\g_p=\g$, $p>k$.

Let $\Sigma$ be a complex compact Riemann surface with two given non-intersecting finite sets of marked points: $\Pi$ and $\Gamma$. Assume every $\ga\in\Gamma$ to be assigned with an $h_\ga\in \Z(R)^*_+$, and with the corresponding grading and filtration. We equip the notation $\g_p$, $\tilde\g_p$ with the upper $\ga$ indicating that the grading (resp. filtration) subspace corresponds to $\ga$.  Let $L$ be a meromorphic mapping $\Sigma\to\g$, holomorphic outside the marked points which may have poles of an arbitrary order at the points in $\Pi$, and has the expansion of the following form in a neighborhood of any~$\ga\in\Gamma$:
\begin{equation}\label{E:ga_expan}
   L(z)=\sum\limits_{p=-k}^\infty L_p(z-z_\ga)^p,\ L_p\in\tilde\g_p^\ga
\end{equation}
where $z$ is a local coordinate in the neighborhood of $\ga$, $z_\ga$ is the coordinate of $\ga$ itself. For simplicity, we assume that the depth of grading $k$ is the same all over $\Gamma$, though it would be no difference otherwise.
\begin{lemma}[\cite{Shein_UMN2016}]\label{L:lform}
The expansion \refE{ga_expan} takes place if, and only if \[
  L(z)=(z-z_\ga)^{h_\ga}L_{(0)}(z)(z-z_\ga)^{-h_\ga}
\]
where  $L_{(0)}$ is holomorphic in the neighborhood of $z=z_\ga$.
\end{lemma}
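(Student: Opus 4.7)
The plan is a direct unwinding of the conjugation $(z-z_\gamma)^{h_\gamma}(\cdot)(z-z_\gamma)^{-h_\gamma}$ in terms of the grading induced by $\operatorname{ad} h_\gamma$. First I would fix the meaning of the expression $(z-z_\gamma)^{h_\gamma}L_{(0)}(z)(z-z_\gamma)^{-h_\gamma}$: since $h_\gamma\in\mathfrak{h}$ is semisimple and integer-valued on roots, for any eigenvector $X\in\mathfrak{g}_q^\gamma$ of $\operatorname{ad} h_\gamma$ with eigenvalue $q$ one has
\[
 (z-z_\gamma)^{h_\gamma}\,X\,(z-z_\gamma)^{-h_\gamma}=(z-z_\gamma)^{\operatorname{ad} h_\gamma}X=(z-z_\gamma)^{q}X,
\]
so the operation is just a Laurent-degree shift by the grading weight.

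For the ``if'' direction I would expand $L_{(0)}$ in the local coordinate as $L_{(0)}(z)=\sum_{n\ge 0}L_{(0),n}(z-z_\gamma)^n$ and decompose each coefficient along the grading $\mathfrak{g}=\bigoplus_{q=-k}^{k}\mathfrak{g}_q^\gamma$ as $L_{(0),n}=\sum_{q=-k}^{k}L_{(0),n}^{(q)}$ with $L_{(0),n}^{(q)}\in\mathfrak{g}_q^\gamma$. Applying the conjugation term by term gives
\[
 L(z)=\sum_{n\ge 0}\sum_{q=-k}^{k}(z-z_\gamma)^{n+q}L_{(0),n}^{(q)}.
\]
Collecting powers by setting $p=n+q$, the coefficient of $(z-z_\gamma)^p$ is $L_p=\sum_{q\le \min(p,k)}L_{(0),p-q}^{(q)}\in\bigoplus_{q\le p}\mathfrak{g}_q^\gamma=\tilde{\mathfrak{g}}_p^\gamma$, and the condition $n\ge 0$ forces $p\ge q\ge -k$, giving exactly the expansion \refE{ga_expan}.

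For the ``only if'' direction I would simply define $L_{(0)}(z):=(z-z_\gamma)^{-h_\gamma}L(z)(z-z_\gamma)^{h_\gamma}$ and verify holomorphy. Given \refE{ga_expan}, decompose each $L_p\in\tilde{\mathfrak{g}}_p^\gamma$ as $L_p=\sum_{q\le p}L_p^{(q)}$ with $L_p^{(q)}\in\mathfrak{g}_q^\gamma$. Then
\[
 L_{(0)}(z)=\sum_{p\ge -k}\sum_{q\le p}(z-z_\gamma)^{p-q}L_p^{(q)},
\]
and every exponent $p-q$ is non-negative, so $L_{(0)}$ is holomorphic at $z_\gamma$.

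There is no real obstacle here: both directions are a single change of summation index once one observes that conjugation by $(z-z_\gamma)^{h_\gamma}$ acts as multiplication by $(z-z_\gamma)^q$ on each graded piece $\mathfrak{g}_q^\gamma$. The only minor point requiring care is that the filtration condition $L_p\in\tilde{\mathfrak{g}}_p^\gamma$ (rather than $L_p\in\mathfrak{g}_p^\gamma$) matches precisely the constraint $q\le p$ produced by holomorphy of $L_{(0)}$, which is what makes the equivalence hold.
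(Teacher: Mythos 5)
Your argument is correct and is exactly the standard computation behind this lemma: conjugation by $(z-z_\gamma)^{h_\gamma}$ acts as $(z-z_\gamma)^{\operatorname{ad}h_\gamma}$, i.e.\ as the degree shift $(z-z_\gamma)^q$ on each graded piece $\g_q^\gamma$, and the two directions are a single reindexing $p=n+q$. Note that the paper itself gives no proof here (the lemma is quoted from \cite{Shein_UMN2016}), but your proof matches the intended one, and you correctly identify the only subtle point, namely that holomorphy of $L_{(0)}$ corresponds to the filtration condition $L_p\in\tilde{\g}_p^\gamma$ rather than the grading condition $L_p\in\g_p^\gamma$.
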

We denote by $\L_{\Pi,\Gamma,h}$ a linear space of all such mappings. Since the relation \refE{ga_expan} is preserved under commutator, $\L_{\Pi,\Gamma,h}$ is a Lie algebra called {\it Lax operator algebra}.

Let $M:\ \Sigma\to\g$ be a meromorphic mapping  holomorphic outside $\Pi$ and $\Gamma$, for every $\ga\in\Gamma$ having a Laurent expansion
\begin{equation}\label{E:M_oper}
  M(z)=\frac{\nu_\ga h_\ga}{z-z_\ga}+\sum_{i=-k}^\infty M_i^\ga(z-z_\ga)^i
\end{equation}
at $\ga$, where $M_i^\ga\in\tilde\g_i^\ga$ for $i<0$, $M_i^\ga\in\g$ for $i\ge 0$, and $\nu_\ga\in\C$. We denote by $\M_{\Pi,\Gamma,h}$ the space of such mapping corresponding to given data $\Pi,\Gamma,h$ (as above, $h=\{ h_\ga\ |\ \ga\in\Gamma\}$).  Obviously, $\L_{\Pi,\Gamma,h}\subset\M_{\Pi,\Gamma,h}$.

Let $D$ be a non-negative divisor supported at $\Pi$: $D=m_1P_1+\ldots+m_NP_N$ where $m_1,\ldots,m_N\ge 0$. Let $\L_{\Pi,\Gamma,h}^D=\{ L\in\L_{\Pi,\Gamma,h}\ |\ (L)+\sum\limits_{s=1}^K k_s\ga_s+D\ge 0 \}$. We define a sheaf $\L^D$ formed by the spaces $\L_{\Pi,\Gamma,h}^D$ as fibres.  A point of the total space of the sheaf is given by a triple $\{\Gamma,h,L\}$ where the pair $\{ \Gamma,h \}$ ($h=\{ h_\ga\ |\ \ga\in\Gamma  \}$) represents a point of the base, $L\in\L_{\Pi,\Gamma,h}$ represents a point of the fibre over $\{\Gamma,h\}$.

We assume $\g$ to be a classical simple Lie algebra, and $G$ be the corresponding classical group. Then $G$ operates on the total space of the sheaf $\L^D$. The action of a $g\in G$ is as follows: $\ga\to\ga$, $h_\ga\to gh_\ga g^{-1}$, $L\to gL g^{-1}$.

Many results below apply to a more general situation of an arbitrary semi-simple Lie algebra $\g$ and the algebraic group $G$ corresponding to $\g$ and its faithful representation in a linear space $V$.
\begin{definition}\label{D:PhSp}
The quotient of the total space of the sheaf $\L^D$ by the just defined $G$-action serves as a \emph{phase space} of the dynamical system we are going to define. Let this space be denoted by $\P^D$. Thus $\P^D=\L^D/G$.
\end{definition}

We give the dynamics on the phase space by means of a Lax equation. It is our next step to introduce it.

By \emph{Lax operator} we mean the map of the total space of the sheaf $\L^D$ to $Mer(\Sigma\to\g)$ given by $\{\Gamma,h,L\}\to L$, i.e. the map forgetting the base component of a point in $\L^D$. Taking account of $G$-equivariance of the map, we regard to it as to a map defined on the phase space. In abuse of notation, we will denote the map (i.e. the Lax operator) by~$L$ also.

We introduce the sheaf $\M^D$ similarly to $\L^D$ replacing $L$ defined by \refE{ga_expan} with $M$ defined by \refE{M_oper} in the above definitions. We call the forgetting map on $\M^D :\, \{ \Gamma, h, M\}\to M$ the \emph{$M$-operator}.

The equation
\begin{equation}\label{E:Lax_eq}
 {\dot L} = [L,M]
\end{equation}
where ${\dot L}= dL/dt$, is called the \emph{Lax equation}. This is a system of ordinary differential equations for the dynamical variables including $\{ (z_\ga, h_\ga)\ |\ \ga\in\Gamma\}$, and the parameters of the main parts of meromorphic functions $L$ and $M$ at the points in $\Pi$ and $\Gamma$.

Following \cite{Klax}, a system given by the Lax representation \refE{ga_expan}--\refE{Lax_eq} with $D=(\w)$ where $\w$ is a holomorphic differential on $\Sigma$, is called a \emph{Hitchin system}.

\subsection{Hierarchies}
In order that the system \refE{Lax_eq} is closed, it is necessary to give $M$ as a function of $L$. Here we will do it following the lines of \cite{Klax,Shein_UMN2016,Sh_DGr}.

Let $a$ denote a triple of the form $\{ \chi, P\in\Pi, m>-m_P\}$ where $\chi$ is an invariant polynomial on the Lie algebra $\g$, $m_P$ is the mulniplicity of $P$ in the divisor $D$.

We define the \emph{gradient} $\nabla \chi(L)\in\g$ of the polynomial $\chi$ at the point $L\in\g$ by means of the relation
\begin{equation}\label{E:variation}
   d\chi(L)=\langle \nabla \chi(L),d L\rangle ,
\end{equation}
where $d\chi$ is the differential of $\chi$ as of a function on $\g$, $\langle \cdot\, ,\cdot\rangle$ is the Cartan--Killing form on $\g$. If $L\in\L_{\Pi,\Gamma,h}$, i.e. we regard to $L$ as to a meromorphic $\g$-valued function on $\Sigma$, then $\nabla \chi(L)$ will be such function too. If it is considered as a function of a local coordinate $w$ on $\Sigma$, we write $\nabla \chi(w)$.
\begin{lemma}[\cite{Klax,Shein_UMN2016}]\label{L:Ma}
For every triple $a=\{ \chi, P, m\}$, and every $L\in\L_{\Pi,\Gamma,h}^D$, there is a unique appropriately normalized $M_a\in\M_{\Pi,\Gamma,h}^D$ having a unique pole outside $\Gamma$, namely, at $P$, and such that in a neighborhood of $P$ (with a local parameter $\w$)
\begin{equation}\label{E:sravn}
     M_a(w)=w^{-m}\nabla \chi(L(w))+O(1),
\end{equation}
and the equation ${\dot L} = [L,M_a]$ is well-defined (i.e. $([L,M_a])$ is a tangent vector to $\L_{\Pi,\Gamma,h}^D$.
\end{lemma}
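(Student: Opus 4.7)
My plan is a two-step construction: first prescribe a local candidate for $M_a$ at $P$, then globalize it as a section of the $M$-operator sheaf $\M^D_{\Pi,\Gamma,h}$. Since $\chi$ is a polynomial invariant on $\g$ and $L$ is meromorphic at $P$ with pole of order at most $m_P$, the function $\nabla\chi(L(w))$ is meromorphic at $P$, and multiplying by $w^{-m}$ with $m>-m_P$ produces a well-defined finite Laurent tail. The task becomes finding a global $M_a\in\M^D_{\Pi,\Gamma,h}$ with exactly this singular part at $P$, regular at all other points of $\Pi$, and satisfying the filtration form \refE{M_oper} at each $\ga\in\Gamma$.

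For existence and uniqueness I would use a Riemann--Roch-type dimension count of the kind developed in \cite{Klax,Shein_UMN2016} for Lax operator algebras. The space of global sections of $\M^D_{\Pi,\Gamma,h}$ whose only $\Pi$-pole is at $P$ is finite-dimensional; the principal-part map from this space into admissible Laurent tails at $P$ is linear, and a Riemann--Roch count balancing the filtration constraints at each Tyurin point against the pole budget provided by $D$ and the prescribed singular order at $P$ shows that this map is surjective with a finite-dimensional kernel. The ``appropriate normalization'' then kills this kernel (for example by prescribing the vanishing of specified Taylor coefficients at an auxiliary point), yielding a unique $M_a$ that realizes the prescribed principal part.

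The remaining step is to verify that $[L,M_a]$ is a tangent vector to $\L^D_{\Pi,\Gamma,h}$, where tangent vectors include variations of both the function $L$ and the Tyurin data $(z_\ga,h_\ga)$. Away from $\Gamma$ the commutator is automatically bounded by the divisor $D$. At each $\ga\in\Gamma$ I would decompose $M_a(z)=\nu_\ga h_\ga/(z-z_\ga)+N_a^\ga(z)$; the commutator with $N_a^\ga$ preserves the filtration via $[\tilde\g^\ga_p,\tilde\g^\ga_q]\subset\tilde\g^\ga_{p+q}$, while the commutator $[\nu_\ga h_\ga/(z-z_\ga),L]$ is interpreted as the infinitesimal shift of $z_\ga$ with velocity $\nu_\ga$ and exactly compensates for the shift-term obtained by differentiating the expansion \refE{ga_expan} with respect to $z_\ga$. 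The upshot is that $[L,M_a]$ represents a simultaneous variation of $(L_p,z_\ga,h_\ga)$ respecting \refE{ga_expan}.

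The main obstacle is the dimension-counting step: the correction ensuring the only $\Pi$-pole is at $P$ must simultaneously respect the filtration at every Tyurin point without re-introducing forbidden singular terms. This balance relies on the hypothesis that $h_\ga$ lies in the positive chamber of $\Z(R)^*$, which aligns the filtration at $\Gamma$ with the Cartan--Killing pairing used in \refE{variation}, and on the $\Ad G$-invariance of $\chi$, which is what makes $\nabla\chi(L(w))$ well-defined on the gauge quotient $\P^D=\L^D/G$ and hence makes the construction equivariant.
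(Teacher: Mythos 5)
First, note that the paper does not actually prove \refL{Ma}: it is imported from \cite{Klax,Shein_UMN2016}, so there is no in-paper argument to compare yours against. Your sketch follows the standard Krichever--Sheinman strategy (prescribe the principal part $w^{-m}\nabla\chi(L(w))$ at $P$, globalize by a Riemann--Roch count, normalize away the finite-dimensional kernel), and the first two steps are essentially sound, modulo one bookkeeping point you should make explicit: the pole order of $M_a$ at $P$ is $m$ plus the pole order of $\nabla\chi(L)$ (e.g.\ $nm_P+m$ for $\chi=\tr L^{n+1}$), so the divisor bounding the sheaf $\M$ along $\Pi$ must be enlarged accordingly before the Riemann--Roch count is run.

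The genuine gap is in the tangency verification at the Tyurin points. You claim that the commutator of $L$ with the regular part $N_a^\ga$ of $M_a$ preserves the filtration because $[\tilde\g_p^\ga,\tilde\g_q^\ga]\subset\tilde\g_{p+q}^\ga$. But by the definition \refE{M_oper} the coefficients $M_i^\ga$ for $0\le i<k$ are arbitrary elements of $\g=\tilde\g_k^\ga$, not of $\tilde\g_i^\ga$, so $[L_p,M_i^\ga]$ is only guaranteed to lie in $\tilde\g_{p+k}^\ga$ rather than in the required $\tilde\g_{p+i}^\ga$; the filtration inclusion alone does not close the argument. What saves the day is that the filtration subspaces themselves move along the flow: besides $\partial_a z_\ga=\nu_{a,\ga}$ (which you do use to cancel the term $[L,\nu_\ga h_\ga/(z-z_\ga)]$ against the shift of the expansion point) one needs the second equation of motion $\partial_a h_\ga=[h_\ga,M_{a,0}^\ga]$ of \refE{mov_tyur}, which rotates each $\tilde\g_p^\ga$ by $\ad M_{a,0}^\ga$ and absorbs the offending $i=0$ contributions; the terms with $1\le i<k$ require the further structural fact that near $\ga$ the constructed $M_a$ has the conjugated normal form $\nu_\ga h_\ga/(z-z_\ga)+(z-z_\ga)^{h_\ga}M_{(0)}(z)(z-z_\ga)^{-h_\ga}+O(1)$ analogous to \refL{lform} --- this is exactly how the paper handles the same difficulty in the proof of \refT{invsptr}, see \refE{M-type}. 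Without establishing that normal form for your globally constructed $M_a$, the assertion that $[L,M_a]$ is tangent to $\L_{\Pi,\Gamma,h}^D$ does not follow from what you have written.
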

\refL{Ma} gives $M_a$ as a function of $L$ which we denote by $M_a(L)$. We refer to \cite{Shein_UMN2016} for the details on the normalization mentioned in the theorem.

A point $P\in\Sigma$ is called a \emph{regular} point of $L$ if $L(P)$ is well-defined, and is a regular element of the Lie algebra $\g$. It is called  \emph{irregular} if this evaluation is well-defined but irregular in $\g$. The set of all irregular points of a given $L\in\L_{\Pi,\Gamma,h}^D$ is finite. This implies that the set of Lax operators for which the set of irregular points has an empty intersection with $\Pi$, is open.
\begin{theorem}[\cite{Klax,Shein_UMN2016}]\label{T:hierarch}
Relations
\begin{equation}\label{E:part_a}
      \partial_aL=[L,M_a]
\end{equation}
where $M_a=M_a(L)$, give a family of commuting vector fields (parameterized by $a$) on the open subset in $\P^D$ given by Lax operators such that their sets of irregular points have empty intersections with $\Pi$.
\end{theorem}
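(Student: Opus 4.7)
The proof plan has two content points: (a) each $\partial_a$ is well-defined as a vector field on the quotient $\P^D=\L^D/G$, and (b) any two such vector fields commute.

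For (a), \refL{Ma} already guarantees that $[L,M_a(L)]$ is tangent to the fibre $\L^D_{\Pi,\Ga,h}$; what remains is $G$-equivariance. Since $\chi$ is an invariant polynomial, differentiating $\chi(gLg^{-1})=\chi(L)$ gives $\nabla\chi(gLg^{-1})=g\,\nabla\chi(L)\,g^{-1}$. Together with the $G$-invariance of the normalization of $M_a$ at $\Ga$ — which involves only the points $\ga$, the elements $h_\ga$, and the filtrations $\tilde\g_p^\ga$, all transforming equivariantly under $L\mapsto gLg^{-1}$, $h_\ga\mapsto gh_\ga g^{-1}$ — this yields $M_a(gLg^{-1})=gM_a(L)g^{-1}$, so $\partial_a$ descends to $\P^D$.

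For (b), a standard application of the Jacobi identity gives
\begin{equation*}
[\partial_a,\partial_b]L \,=\, [L,F_{ab}],\qquad F_{ab}:=\partial_a M_b-\partial_b M_a+[M_a,M_b],
\end{equation*}
so it suffices to prove that $[L,F_{ab}]$ represents the zero tangent vector on $\P^D$. The strategy is to show that $F_{ab}$ is itself an $M$-operator lying in $\M_{\Pi,\Ga,h}^D$ whose singular parts at every point of $\Pi$ vanish, and then to invoke the uniqueness clause of \refL{Ma}. The key algebraic input at $\Pi$ is the identity $[L,\nabla\chi(L)]=0$, which follows from $\Ad$-invariance of $\chi$: differentiating $\chi(e^{tX}Le^{-tX})=\chi(L)$ at $t=0$ gives $\langle\nabla\chi(L),[X,L]\rangle=0$ for every $X\in\g$, and nondegeneracy of the Cartan--Killing form converts this into $[L,\nabla\chi(L)]=0$. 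Consequently the leading term $w^{-m_a}\nabla\chi_a(L(w))$ of $M_a$ at $P_a$ commutes with $L(w)$; a direct expansion then shows that the putative pole of $F_{ab}$ at $P_a$ coming from $\partial_b M_a$ cancels the one coming from $[M_a,M_b]$, and symmetrically at $P_b$.

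The delicate part, and the main obstacle, is the analysis at the Tyurin points $\ga\in\Ga$. There both $M_a$ and $M_b$ carry the simple-pole term $\nu_\ga h_\ga/(z-z_\ga)$ of \refE{M_oper}, and the derivatives $\partial_a M_b$, $\partial_b M_a$ depend on the induced flow of the base data $(z_\ga,h_\ga)$ — a dependence that must itself be read off from the normalization built into \refL{Ma}. One must verify that these pieces combine so that, first, $F_{ab}$ admits an expansion of the form \refE{M_oper} with some new coefficients $\nu_\ga^{ab}h_\ga$ and tails in $\tilde\g_p^\ga$, and second, $[L,F_{ab}]$ preserves \refE{ga_expan}, i.e.\ is tangent to $\L^D_{\Pi,\Ga,h}$. \refL{lform}, which recasts \refE{ga_expan} as conjugation by $(z-z_\ga)^{h_\ga}$, is the technical tool that makes this bookkeeping manageable. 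Once both items are in place, the uniqueness assertion in \refL{Ma} applied to $F_{ab}$ — whose singular parts at $\Pi$ now vanish — forces $[L,F_{ab}]=0$ modulo the infinitesimal $G$-action along the fibre, yielding commutativity of $\partial_a$ and $\partial_b$.
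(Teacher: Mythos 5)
First, a point of reference: the paper does not actually prove \refT{hierarch} --- it is imported from \cite{Klax,Shein_UMN2016}, and the only substantive remark the paper adds is that the proof in \cite{Shein_UMN2016} is carried out \emph{under the conditions} \refE{mov_tyur}, i.e. $\partial_a z_\ga=\nu_{a,\ga}$ and $\partial_a h_\ga=[h_\ga,M_{a,0}^\ga]$. Your outline reproduces the standard zero-curvature scheme of those references: the formula $[\partial_a,\partial_b]L=[L,F_{ab}]$ with $F_{ab}=\partial_aM_b-\partial_bM_a+[M_a,M_b]$ is correct, as are the identity $[L,\nabla\chi(L)]=0$ from $\Ad$-invariance, the resulting cancellation of singular parts at $\Pi$, and the equivariance argument for descent to $\P^D$.

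The genuine gap is precisely the step you defer as ``one must verify.'' The equations \refE{mov_tyur} are not something that can be ``read off from the normalization built into \refL{Ma}'': they are imposed as the definition of how the base data $(z_\ga,h_\ga)$ of the sheaf $\L^D$ evolve under $\partial_a$, and they are needed already for the single equation $\partial_aL=[L,M_a]$ to make sense --- the expansion \refE{ga_expan} is tied to a moving point $z_\ga(t)$ and a moving grading element $h_\ga(t)$, so $[L,M_a]$ is tangent to $\L^D$ only if the Tyurin data flow according to \refE{mov_tyur} (this is also what produces the term $\nu_\ga h_\ga/(z-z_\ga)$ in \refE{M_oper}). Two consequences for your plan. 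First, the expansion of $F_{ab}$ at $\ga$ cannot be controlled without invoking \refE{mov_tyur} explicitly, since $\partial_bM_a$ contains contributions from $\partial_b z_\ga$ and $\partial_b h_\ga$. Second, and more importantly, commutativity of the vector fields on $\P^D$ is \emph{not} exhausted by showing $[\partial_a,\partial_b]L=0$: since $(z_\ga,h_\ga)$ are themselves dynamical variables of the system, you must also prove $[\partial_a,\partial_b]z_\ga=0$ and $[\partial_a,\partial_b]h_\ga=0$ for the flows \refE{mov_tyur}. That additional verification, which is where the real work lies in \cite{Shein_UMN2016}, is absent from your sketch.
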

\begin{corollary}\label{C:coroll}
Lax equations $\partial_a L=[L,M_a]$ give commuting flows on $\P^D$.
\end{corollary}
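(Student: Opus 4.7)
The plan is to deduce \refC{coroll} directly from \refT{hierarch} together with the classical fact that pairwise commuting vector fields integrate to pairwise commuting local flows. \refT{hierarch} produces, on the open subset $U\subset\P^D$ consisting of classes $[L]$ whose irregular locus is disjoint from $\Pi$, a family of vector fields $X_a:=[L,M_a(L)]$ satisfying $[X_a,X_b]=0$. Before invoking the commuting-flows theorem one needs two preliminaries. The first is that $X_a$ descends from $\L^D$ to the quotient $\P^D=\L^D/G$: conjugation by $g\in G$ sends $[L,M_a]$ to $[gLg^{-1},gM_ag^{-1}]$, so the vector field is $G$-equivariant and the flow is well-defined on equivalence classes. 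The second is that $U$ is invariant under each flow; this is where I would invest the only real work, via isospectrality. The equation $\dot L=[L,M_a]$ preserves all invariant polynomials of $L$ pointwise in $P\in\Sigma$, hence the regularity type of $L(P)$ in $\g$ (which is read off from those invariants via \refE{variation} and the Chevalley description of regular elements) is preserved. Consequently the (finite) irregular locus of $L(t)$ deforms continuously—indeed, holomorphically in the local coordinate on $\Sigma$—while $\Pi$ remains fixed, so starting from a point of $U$ the flow remains in $U$ on some open interval.

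With these preliminaries the classical theorem on commuting vector fields applies: $\phi^a_t\circ\phi^b_s=\phi^b_s\circ\phi^a_t$ on the common domain of definition inside $U$. Finally, since $U$ is open and dense in $\P^D$ (a finite irregular locus generically misses the fixed finite set $\Pi$) and the prescription $\partial_aL=[L,M_a]$ only makes sense where $M_a(L)$ is defined, which by \refL{Ma} is precisely on $U$, the statement \emph{commuting flows on $\P^D$} is exactly the conclusion provided by the argument above. The sole substantive obstacle is the invariance of $U$, handled by the isospectrality argument; everything else is either formal equivariance or a direct appeal to the ODE commuting-flows theorem.
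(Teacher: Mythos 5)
Your overall route is the intended one: the paper offers no written proof of this corollary, treating it as an immediate consequence of \refT{hierarch} via the standard fact that pairwise commuting vector fields integrate to pairwise commuting local flows, with the phrase ``on $\P^D$'' understood as ``on the open dense subset of $\P^D$ where the vector fields are defined.'' Your preliminaries on $G$-equivariance (already built into the statement of \refT{hierarch}, which lives on the quotient $\P^D=\L^D/G$) and on density of the good locus are fine.

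There is, however, one step that is wrong as written: the claim that ``the regularity type of $L(P)$ in $\g$ is read off from the invariants,'' so that isospectrality forces the irregular locus to stay away from $\Pi$. Regularity of an element of a semisimple Lie algebra is \emph{not} a function of the values of the invariant polynomials: a regular nilpotent element and the zero element have identical (vanishing) invariants, yet the first is regular and the second is not. So preservation of $\chi(L(q))$ along the flow does not by itself preserve the regular locus. The repair is to use the stronger fact that the Lax flow preserves the pointwise conjugacy class, not merely the invariants: at any $q$ where $M_a$ is holomorphic one solves $\dot g=-M_a g$ and gets $L(t)(q)=g(t,q)L(0)(q)g(t,q)^{-1}$, so $L(t)(q)$ is regular iff $L(0)(q)$ is. By \refL{Ma}, $M_a$ is holomorphic on $\Pi\setminus\{P_a\}$, and at $P_a$ itself $L$ generically fails to be evaluable (it has a pole there), so the condition defining the open set of \refT{hierarch} is preserved at least for small times, which is all the local commuting-flows theorem needs. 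With that substitution your argument is complete and matches what the paper leaves implicit.
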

In \cite{Shein_UMN2016}, \refT{hierarch} has been proven under conditions \cite[(3.21)]{Shein_UMN2016} which read as follows in our notation
\begin{equation}\label{E:mov_tyur}
   \partial_a z_\ga=\nu_{a,\ga},\ \partial_ah_\ga=[h_\ga,M_{a,0}^\ga]
\end{equation}
($\nu_{a,\ga}$ coming from \refE{M_oper}).
The same relations appear as the conditions of holomorphy of spectra of $M$-operators along trajectories of \refE{Lax_eq} \cite[Section 4.5]{Shein_UMN2016}. In our approach, equations \refE{mov_tyur} play the same role as the \emph{equations of motion of Tyurin parameters} in \cite{KN_UMN1980,Klax}.
\begin{theorem}
For any $a$ the vector field $\partial_a$ is Hamiltonian with the Hamiltonian
\[
        H_a=\res_{P_a}w_a^{-m_a}\chi(L(w_a))\w(w_a)
\]
with respect to the Krichever--Phong symplectic structure.
\end{theorem}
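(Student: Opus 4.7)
The plan is to verify the Hamiltonian property in the standard Krichever--Phong framework used in \cite{Klax, Shein_UMN2016}, adapted to the semi-simple setting developed above. The argument rests on an explicit residue representation of both $dH_a$ and of the contraction $i_{\partial_a}\Omega_{KP}$, combined with the residue theorem on $\Sigma$.

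First I would recall that on the phase space $\P^D$ the Krichever--Phong $2$-form $\Omega_{KP}$ admits a finite presentation as a sum of local residues,
\[
\Omega_{KP}=\sum_{Q\in\Pi\cup\Ga}\res_Q\langle\Psi^{-1}\delta L\wedge\delta\Psi\rangle\,\w,
\]
where $\Psi$ is the matrix of Baker--Akhieser functions diagonalizing $L$ on the spectral curve ($L=\Psi\Lambda\Psi^{-1}$) and $\langle\cdot,\cdot\rangle$ is the Cartan--Killing form. As in \cite{Klax}, this form is $G$-invariant and descends to the quotient $\P^D=\L^D/G$ of \refD{PhSp}.

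Next, by the definition \refE{variation} of the gradient,
\[
dH_a=\res_{P_a}w_a^{-m_a}\langle\nabla\chi(L(w_a)),dL\rangle\,\w(w_a).
\]
By \refL{Ma}, near $P_a$ one has $w_a^{-m_a}\nabla\chi(L)=M_a+O(1)$, and the $O(1)$ tail contributes nothing to the residue (the pole order of $dL$ at $P_a$ is compensated by the order with which $\w$ appears in the defining divisor $D$). Hence
\[
dH_a=\res_{P_a}\langle M_a,dL\rangle\,\w.
\]
Because $M_a$ is holomorphic on $\Sigma\setminus(\{P_a\}\cup\Ga)$, the residue theorem yields
\[
dH_a=-\sum_{Q\in(\Pi\cup\Ga)\setminus\{P_a\}}\res_Q\langle M_a,dL\rangle\,\w.
\]

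The main step, and the principal obstacle, is to match the right-hand side with $i_{\partial_a}\Omega_{KP}$ point by point. At the remaining points of $\Pi$ this is a routine computation using the Lax equation $\partial_aL=[L,M_a]$, cyclicity of $\langle\cdot,\cdot\rangle$, and the eigenfunction decomposition $L=\Psi\Lambda\Psi^{-1}$. The delicate contributions come from the Tyurin points $\ga\in\Ga$, where $L$ and $M_a$ have the filtered expansions \refE{ga_expan}, \refE{M_oper}. Here one must account carefully for the variations of the Tyurin data $(z_\ga,h_\ga)$ governed by the equations of motion \refE{mov_tyur}: the extra leading term $\nu_{a,\ga}h_\ga/(z-z_\ga)$ in $M_a$, together with the conjugation structure of \refL{lform}, conspires so that the local residue at $\ga$ equals the pairing of $\partial_a(z_\ga,h_\ga)$ with the corresponding directional variations in $\Omega_{KP}$. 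This is also the point at which one must verify consistency with the passage to the $G$-quotient, so that the identity truly holds on $\P^D$ rather than only on $\L^D$. Summing the local identities then produces $dH_a=i_{\partial_a}\Omega_{KP}$, which is precisely the Hamiltonian property.
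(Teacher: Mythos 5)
The paper itself contains no proof of this theorem: it is quoted from \cite{Shein_UMN2016} (see also \cite{Klax}), and the surrounding text only refers the reader there for the relevant version of the Krichever--Phong structure. So your proposal has to be judged against the argument of those sources. Its overall strategy is the standard and correct one, and the first half is sound: $\delta H_a=\res_{P_a}w_a^{-m_a}\langle\nabla\chi_a(L),\delta L\rangle\,\w$ follows from \refE{variation}; replacing $w_a^{-m_a}\nabla\chi_a(L)$ by $M_a$ via \refE{sravn} costs only an $O(1)$ term whose pairing with $\delta L$ is residue-free because $D=(\w)$ in the Hitchin normalization; and the residue theorem redistributes $\res_{P_a}\langle M_a,\delta L\rangle\,\w$ over the remaining points of $\Pi\cup\Ga$.

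The gap is that the proposal stops exactly where the content of the theorem lies. The two matching steps --- identifying the residues at the points of $\Pi\setminus\{P_a\}$ with the corresponding local terms of $i_{\partial_a}\Omega_{KP}$, and above all the computation at the Tyurin points --- are only asserted (``routine'', ``conspires''), not carried out. In the semi-simple setting the Tyurin contribution is precisely the new difficulty: $(z-z_\ga)^{h_\ga}$ has both zeros and poles on the diagonal, so $\Psi$ genuinely has poles at $\ga$ (unlike for $GL(n)$, where the Tyurin points are zeros); the local expansions \refE{ga_expan} and \refE{M_oper} are constrained by the filtration $\tilde\g_p^\ga$; and one must verify that the equations of motion \refE{mov_tyur} for $(z_\ga,h_\ga)$ are exactly what makes the residue at $\ga$ of $\langle M_a,\delta L\rangle\,\w$ equal the contraction of the $\ga$-term of $\Omega_{KP}$ with $\partial_a$, the ambiguity in $\Psi$ (normalization of eigenvectors, choice of representative in the $G$-orbit) contributing a term that vanishes only after passing to $\P^D=\L^D/G$. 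Until that local computation is done --- it is the bulk of the corresponding argument in \cite{Shein_UMN2016} --- what you have is a correct plan rather than a proof.
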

We refer to \cite{Shein_UMN2016} for the version of the Krichever--Phong symplectic structure needed here. For the origin of this notion we refer to \cite{Klax} and references therein, and to \cite{Sh_DGr}.
\section{Spectral transform}\label{S:sp_tr}
It is the aim of the section, to assign a certain spectral data to the above constructed hierarchy of Lax equations.
\subsection{Spectral curve}\label{SS:spec_curve}
By the {\it spectral curve} of an $L\in\P^D$ we mean a curve $\Sigma_L$ given by the equation
\begin{equation}\label{E:spec_curve}
   \det(L(q)-\l)=0,\ q\in\Sigma.
\end{equation}
It is a $d$-fold branch covering of $\Sigma$ where $d=\dim V$, $V$ is the space of the standard representation of $G$.

The spectral curve possesses the following simple properties.
\begin{lemma}\label{L:prop_spec}
\begin{itemize}
\item[$1^\circ$]
Represent \refE{spec_curve} in the form
\begin{equation}\label{E:sp_c_eq}
    R(q,\l)=\l^{d}+\sum_{j=1}^d r_j(q)\l^{d-j}=0.
\end{equation}
Then $r_j\in {\mathcal O}(\Sigma,-jD)$.

\item[$2^\circ$]
Assume $\g$ is one of the Lie algebras $\so(2n)$, $\so(2n+1)$, $\spn(2n)$. Then the spectral curve is invariant with respect to the involution $\l\to -\l$.
\end{itemize}
\end{lemma}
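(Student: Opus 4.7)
For part $1^\circ$, the starting observation is that $(-1)^j r_j(q)$ is the $j$-th elementary symmetric polynomial in the eigenvalues of $L(q)$, hence a degree-$j$ $\Ad$-invariant polynomial in the matrix entries of $L(q)$. I would then check the pole-order estimate pointwise. Outside $\Pi\cup\Gamma$ there is nothing to do, since $L$ is holomorphic there. At a point $P\in\Pi$ with multiplicity $m_P$ in $D$, $L$ has a pole of order at most $m_P$, and any degree-$j$ homogeneous polynomial in its entries therefore has a pole of order at most $jm_P$ at $P$. The only delicate check is holomorphy at a Tyurin point $\ga\in\Gamma$, where $L$ itself is genuinely not holomorphic; here the key input is \refL{lform}, which expresses $L(z)=(z-z_\ga)^{h_\ga}L_{(0)}(z)(z-z_\ga)^{-h_\ga}$ with $L_{(0)}$ holomorphic near $\ga$. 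Since $r_j$ is a class function, $r_j(L(z))=r_j(L_{(0)}(z))$, which is holomorphic at $\ga$. Collecting the three local bounds yields $(r_j)+jD\ge 0$, i.e.\ $r_j\in\O(\Sigma,-jD)$.

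For part $2^\circ$, the three algebras $\so(2n)$, $\so(2n+1)$, $\spn(2n)$ preserve a non-degenerate bilinear form with Gram matrix $J$ (symmetric in the orthogonal cases, skew in the symplectic case). In each case the defining relation reads $L^t J+JL=0$, i.e.\ $L^t=-JLJ^{-1}$. Using $\det(L-\l I)=\det(L^t-\l I)$ together with invariance of determinant under conjugation by $J$, I would compute
\[
R(q,\l)=\det(L(q)-\l I)=\det(L(q)^t-\l I)=\det(-L(q)-\l I)=(-1)^{d}R(q,-\l).
\]
For $d$ even (cases $\so(2n)$ and $\spn(2n)$) this gives $R(q,\l)=R(q,-\l)$, while for $d=2n+1$ it gives $R(q,\l)=-R(q,-\l)$; in either case the vanishing locus $\Sigma_L$ is invariant under $\l\mapsto -\l$.

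The only non-routine point is the holomorphy of $r_j$ at the Tyurin points, which is dispatched by the class-function argument via \refL{lform}; everything else is standard invariant theory together with a one-line determinantal identity coming from the defining bilinear form of the classical algebra.
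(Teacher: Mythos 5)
Your proof is correct and follows essentially the same route as the paper's: for $1^\circ$, the bound $(r_j)\ge -jD$ away from $\Gamma$ combined with holomorphy of the spectral invariants at the Tyurin points, and for $2^\circ$, the identity $\s L^T\s^{-1}=-L$ applied inside the characteristic determinant. The only (harmless) difference is that you derive the holomorphy at Tyurin points directly from \refL{lform} via the class-function property of $r_j$, whereas the paper cites the literature for the holomorphy of the spectrum there.
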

\begin{proof}
$1^\circ$ According to \cite{Klax,Shein_UMN2016,Sh_DGr} the spectra of Lax  operators are holomorphic at Tyurin points. Hence polynomials $r_j(q)$ are holomorphic there too. Next, $(L)\ge -D$. Since $r_j$ is a polynomial of degree $j$ in $L$, we obtain $(r_j)\ge -jD$.

$2^\circ$
Let $\s$ be a matrix of the scalar product in $V$ in the cases $\g=\so(2n),\so(2n+1)$, and a matrix of the symplectic form in the case $\g=\spn(2n)$. Then $\s L^T\s^{-1}=-L$ ($L^T$ denotes the matrix transposed to $L$), and \refE{spec_curve} is obviously invariant with respect to the replacement of $L$ with $\s L^T\s^{-1}$. But this replacement results in the equation $\det(L(q)+\l)=0$.
\end{proof}
Let $D_L$ be a pull-back of the divisor $D$ with respect of the covering $\Sigma_L\to\Sigma$.

\subsection{Relation between spectra of $L$- and $M$-operators}
Since $L$ and $\partial_a+M_a$ commute by virtue of the Lax equations, there is a meromorphic vector-valued function $\psi$ on $\Sigma_L$ such that for a general $Q=(q,\l)\in\Sigma_L$
\begin{equation}\label{E:eigen_up}
        L(q)\psi(Q)=\l(Q)\psi(Q),\ (\partial_a+M_a(q))\psi(Q)=f_a(Q)\psi(Q)
\end{equation}
where $\l(Q)$ is a solution of \refE{spec_curve}. For a given $q\in\Sigma$, let $\Lambda(q)$ be a diagonal matrix with all solutions of \refE{spec_curve} on the diagonal. To define it, we fix an arbitrary order of the sheets of the covering $\Sigma_L\to\Sigma$, that is an order of the roots of the equation \refE{spec_curve}. As well, we consider a diagonal matrix $F_a(q)$ with the eigenvalues $f_a(q)$ on the diagonal, placed in the same order.

Let $\Lambda(q)=diag(\l_1(q),\ldots,\l_d(q))$. Placing the vectors $\psi(q,\l_1),\ldots,\psi(q,\l_d)$ in the corresponding order, we obtain a matrix $\Psi(q)$ which we will call \emph{the matrix of eigenvectors}. The following relations are equivalent to \refE{eigen_up}
\begin{equation}\label{E:eigen_down}
     L\Psi=\Psi\Lambda,\ (\partial_a+M_a)\Psi=\Psi F_a.
\end{equation}

The two matrices $\Lambda$ and $F_a$ represent the spectra of the operators $L$ and $M_a$, resp. The relation between the spectra in a neighborhood of $q=P_a$ is given by
\begin{equation}\label{E:relsp1}
      F_a(w_a)=w_a^{-m_a}\nabla\chi_a(\Lambda(w_a))+O(1)
\end{equation}
where $w_a$ is a local parameter in the neighborhood. It has been proven in \cite{Shein_UMN2016}. Indeed, in a neighborhood of $P_a$ we have, first, $F_a=\Psi^{-1}\cdot\partial_a\Psi+\Psi^{-1} M_a\Psi=\Psi^{-1} M_a\Psi+O(1)$ by holomorphy of $\Psi$ and $\Psi^{-1}$ at the points in $\Pi$ (which is a condition of a general position), and second, $M_a = w^{-m}\nabla\chi(L(w))+O(1)$ by \refL{Ma}. By invariance of $\chi$, its gradient $\nabla\chi$ is equivariant, thus we have
$\Psi\nabla\chi(L)\Psi^{-1}=\nabla\chi(\Psi L\Psi^{-1})=\nabla\chi(\Lambda)$ (the first equality follows from \cite[Eq. (3.29)]{Shein_UMN2016}). Therefore, $F_a = \nabla\chi(\Lambda) + O(1)$.

\subsection{From $(L,M)$-pair to Baker--Akhiezer matrix-function}
Given a pair $(L,M_a)$ defined as above, consider the following matrix-valued function on $\Sigma$:
\begin{equation}\label{E:psiexp}
      {\widehat\Psi}=\Psi\exp\left( -\int\limits^{t_a} F_a dt_a   \right).
\end{equation}
where $t_a$ is the time parameter of the flow given by $M_a$.

\begin{theorem}\label{T:prep_0}
\begin{itemize}
\item[(A0)]
$(\partial_a+M_a){\widehat\Psi}=0$.
\item[(A1)]
Columns of $\widehat\Psi$ are orthogonal if $\g=\so(2n),\so(2n+1)$, and skew-orthogonal if $\g=\spn(2n)$.
\item[(A2)]
${\widehat\Psi}$ is meromorphic except at $P_a$ where it has an essential singularity of the form
\[
\widehat\Psi(w_a)=\Psi_a(w_a){\exp}\left(- t_a w^{-m_a}_a\nabla\chi_a(\Lambda(w_a))\right),
\]
$w_a$ is a local parameter in the neighborhood, $\Psi_a$ is a local matrix-valued function, holomorphic and invertible, $\Lambda(w_a)$ is a local notation for $\Lambda(q)$.
\item[(A3)]
$\widehat\Psi$ is meromorphic outside $D$. Apart from Tyurin points, the divisor $D_p$ of its poles is time independent.
\item[(A4)]
At the Tyurin points
\begin{equation}\label{E:TPform_0}
    \widehat\Psi=(z-z_\ga)^{h_\ga}\Psi_0
\end{equation}
where $\Psi_0$ is holomorphic and holomorphically invertible in a neighborhood of $z_\ga$.
\end{itemize}
\end{theorem}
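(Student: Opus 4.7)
The plan is to check (A0)--(A4) in turn, using the defining relations \refE{eigen_down}, the local form \refE{relsp1} at $P_a$, and the Tyurin-parameter equations of motion \refE{mov_tyur}. First, (A0) is a direct differentiation: since $F_a$ is diagonal it commutes with itself across times, so $\partial_{t_a}\exp(-\int^{t_a}F_a\,dt_a)=-F_a\exp(-\int^{t_a}F_a\,dt_a)$; combining this with $(\partial_a+M_a)\Psi=\Psi F_a$ from \refE{eigen_down} yields $(\partial_a+M_a)\widehat\Psi=0$ by cancellation. For (A1), the identity $\sigma L\sigma^{-1}=-L^T$ (resp.\ with $\sigma$ symplectic for $\spn(2n)$) gives the standard computation $(\lambda_i+\lambda_j)\psi_i^T\sigma\psi_j=0$, hence $\psi_i^T\sigma\psi_j=0$ whenever $\lambda_j\neq-\lambda_i$. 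By \refL{prop_spec}, item $2^\circ$, the sheets of $\Sigma_L$ come in $\pm$-pairs, so rescaling within each pair normalizes the remaining non-vanishing pairings and makes $\widehat\Psi$ take values in the group $G$; the diagonal exponential factor in \refE{psiexp} is compatible with this normalization.

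For (A2), substitute $F_a=w_a^{-m_a}\nabla\chi_a(\Lambda(w_a))+O(1)$ from \refE{relsp1} into $\int^{t_a}F_a\,dt_a$: the singular piece integrates to $t_a w_a^{-m_a}\nabla\chi_a(\Lambda)$ (diagonal, so the exponential of it is well-defined) while the $O(1)$ piece yields a function holomorphic in $w_a$. Exponentiation then splits the essential singularity off as a separate factor, with the remainder absorbed into $\Psi_a$; holomorphy and invertibility of $\Psi$ at $P_a$ follow from the general-position assumption that the irregular points of $L$ miss $\Pi$. For (A3), $F_a$ is holomorphic away from $\Pi\cup\Gamma$, so the exponential factor is regular and invertible off that set, and the poles of $\widehat\Psi$ outside $\Pi\cup\Gamma$ coincide with those of $\Psi$; the latter are cut out by the spectral covering $\Sigma_L\to\Sigma$ and are therefore time-independent.

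The hard step is (A4), and it is where I expect the main obstacle. By \refL{lform}, near $\ga$ one has $L=(z-z_\ga)^{h_\ga}L_{(0)}(z)(z-z_\ga)^{-h_\ga}$ with $L_{(0)}$ holomorphic; diagonalizing $L_{(0)}=\Phi\Lambda_0\Phi^{-1}$ holomorphically writes $\Psi=(z-z_\ga)^{h_\ga}\Phi$ with $\Phi$ holomorphic and invertible. It then suffices to prove that $F_a$ is holomorphic at $\ga$, so that the exponential factor remains holomorphic and invertible and does not spoil the form \refE{TPform_0}. The singular part $\nu_{a,\ga}h_\ga/(z-z_\ga)$ of $M_a$ acts on each column $\psi_j=(z-z_\ga)^{h_\ga}\phi_j$ weight-by-weight in $h_\ga$, producing a leading singular contribution to $M_a\psi_j$; this is cancelled exactly by the matching singular piece of $\partial_a\psi_j$ generated by $\partial_a z_\ga=\nu_{a,\ga}$, i.e.\ by the first equation of \refE{mov_tyur}. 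The second equation $\partial_a h_\ga=[h_\ga,M_{a,0}^\ga]$ kills the subleading (potentially logarithmic) pieces coming from the time-dependence of the exponent $h_\ga$ itself. Finally, the orthonormalization within paired columns required for (A1) must be chosen compatibly with the factor $(z-z_\ga)^{h_\ga}$ so that \refE{TPform_0} survives the rescaling---this delicate bookkeeping is exactly what the introduction attributes to the auxiliary Theorem \ref{T:exfor}, and it is the central obstacle in the proof.
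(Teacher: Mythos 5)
Your proof follows essentially the same route as the paper's: the same cancellation for (A0), the standard eigenvector-orthogonality computation behind (A1), the substitution of \refE{relsp1} into the exponent for (A2)--(A3), and the factorization $L=(z-z_\ga)^{h_\ga}L_{(0)}(z-z_\ga)^{-h_\ga}$ with diagonalization of $L_{(0)}$ for (A4). You in fact supply some details the paper leaves implicit (the role of \refE{mov_tyur} in keeping $F_a$ holomorphic at $\ga$, and the compatibility of the diagonal exponential factor with the pairing), so the argument is correct and, if anything, slightly more explicit than the original.
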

\begin{proof}
\item[(A0)]
We have
\begin{align*}
   (\partial_a+M_a){\widehat\Psi}&=\partial_a\Psi\cdot e^{-\int\limits^{t_a}F_adt_a} - \Psi F_ae^{-\int\limits^{t_a}F_adt_a} + M_a\Psi e^{-\int\limits^{t_a}F_adt_a} \\
   &= (\partial_a+M_a)\Psi e^{-\int\limits^{t_a}F_adt_a} - \Psi F_ae^{-\int\limits^{t_a}F_adt_a}.
\end{align*}
By \refE{eigen_down}, $(\partial_a+M_a)\Psi=\Psi F_a$, which completes the proof of the statement.

(A1) holds for the reason that $L$ is skew-symmetric in the case $\g=\so(2n),\so(2n+1)$, and infinitesimal symplectic in the case $\g=\spn(2n)$.

(A2-A3) We refer to \cite{Krichever_Itogi} for the proof of those properties which are conventional. The only point which is not standard is the form of exponent which immediately follows from \refE{relsp1}.

(A4)
By \refL{lform}, there takes place the following relation for the Lax operator:
\[
  L(z)=(z-z_\ga)^{h_\ga}L_{(0)}(z)(z-z_\ga)^{-h_\ga}
\]
where $L_{(0)}(z)$ is holomorphic. Let $\Psi_0(z)$ be the matrix formed by eigenvectors of $L_{(0)}(z)$. In a general position, it is holomorphic, and holomorphically invertible in a neighborhood of $z=z_\ga$. We have $L_{(0)}(z)\Psi_0(z)=\Psi_0(z)\Lambda$. Together with the previous relation it implies $L(z)(z-z_\ga)^{h_\ga}\Psi_0(z)=(z-z_\ga)^{h_\ga}\Psi_0(z)\Lambda$, hence $\Psi(z)=(z-z_\ga)^{h_\ga}\Psi_0(z)$ (up to normalization of eigenvectors which does not affect the proof).
\end{proof}
By the property (A2) $\widehat\Psi$ is a conventional \emph{Baker--Akhieser matrix}, as introduced in \cite{Krichever_Itogi,Klax}. However, by the property (A4), the Tyurin points appear as time-dependent poles. Indeed, $\tr\, h_\ga=0$, hence $(z-z_\ga)^{h_\ga}$ has both zeroes and poles on the diagonal at $z=z_\ga$ (while for $GL(n)$ $(z-z_\ga)^{h_\ga}$ is conjugated to $diag(z-z_\ga,1,\ldots, 1)$ \cite{Shein_UMN2016}). This makes a substantial difference with \cite{Klax}, though in general such situation is not new. It had been considered in \cite{KN_UMN1980} for the Kadomtsev--Petviashvili equation. Another difference with the conventional situation makes the property~(A1).

\subsection{Spectral transform}
By \emph{spectral transform} we mean the correspondence
\[
    \{ L,\{ M_a\}\}\longrightarrow \{ \Sigma_L, D, \{ (\ga, h_\ga) \},\{\nabla\chi_a(\Lambda)  \}  \}.
\]
In other words, the above hierarchy is assigned with the spectral data consisting of the spectral curve, the divisors $D$ (of exponential points of the Baker--Akhieser function) and $D_p$ (its pole divisor) on it, gradients of invariant polynomials, and the set of Tyurin points together with associated grading elements. We would like comment on the the last two items (Tyurin data and $\{\nabla\chi_a(\Lambda)  \}$). Integration of the Tyurin data in spectral data is not conventional. In the  precedential work \cite{Klax}, due to a special form of the elements $h_\ga$ for $\g=\gl(n)$, the Tyurin points turn out to be zeroes of eigenfunctions. The situation for semisimple algebras considered here is completely different: there are always poles at Tyurin points. As for the gradients of invariant polynomials, they are included as polynomials giving a form of the exponents which is conventional \cite{Krichever_Itogi}.

It is the aim of the next section to retrieve the hierarchy by the spectral data, in other words to construct the inverse spectral transform.

Inspite the above introduced $\widehat\Psi$ is not included to spectral data,   it gives a hint at which properties the Baker--Akhieser function giving the inverse spectral transform should possess.


\section{Inverse spectral transform}\label{S:i_sp_tr}
It is the aim of the section to retrieve the $L$- and $M$-operators by the spectral data and Tyurin data. We formulate and prove the theorem of existence and uniqueness of the Baker--Akhieser matrix possessing the above formulated properties (A1)--(A4) (\refT{prep_0}).

\subsection{Retrieving the $(L,M)$-pair by spectral and Tyurin data}\label{SS:KTH}
Given a genus $g$ curve $\Sigma$, and a positive divisor $D$ on it, let $\widehat\Sigma$ be a Riemann surface of the algebraic function $\l$ given by the equation
\begin{equation}\label{E:spkri}
    R(q,\l)=\l^{2n}+\sum_{j=1}^n r_j(q)\l^{2n-2j}=0
\end{equation}
where $r_j\in {\mathcal O}(\Sigma,-jD)$ (the left equality in \refE{spkri} is just a notation). $\widehat\Sigma$ is a branch covering of $\Sigma$. It possesses a holomorphic involution $\s :\, \l\to -\l$. Assume that $\widehat\Sigma$ has no more than nodal singularities. In the case $\widehat\Sigma$ is singular, let ${\widehat\Sigma}^n$ denote its normalization. Let   $\pi :\,\widehat\Sigma\to\Sigma$ ($\pi :\,{\widehat\Sigma}^n\to\Sigma$, resp.) be the covering map.

Let $\l(Q)$ be a function on $\widehat\Sigma$ defined by $\l(q,\l_j)=\l_j$ where $Q=(q,\l_j)$ is an arbitrary point of $\widehat\Sigma$. Let $\widehat D$ be a pull-back of the divisor $D$. Then $\widehat D$ is a positive divisor on $\widehat\Sigma$, obviously invariant with respect to $\s$. Let $\Lambda(q)$ be a diagonal matrix--valued function on $\Sigma$: $\Lambda(q)_{ij}=\d_{ij}\l_j$ where $\l_j$ is the $j$-th root of the spectral curve equation over the point $q\in\Sigma$ for an arbitrary (but fixed) ordering of the roots. By definition
\begin{equation}\label{E:lower_b}
   (\Lambda)+D\ge 0.
\end{equation}
The diagonal matrix-valued function $\Lambda$ is defined up to a preserving the involution enumeration of the roots of the characteristic equation, i.e. up to the action of the Weyl group on $\h$. Similarly we can assign a diagonal matrix-valued function on $\Sigma$ to any (scalar-valued) function on $\widehat\Sigma$. Once the enumeration is fixed, there is a version of the direct image construction which takes a vector-valued function on $\widehat\Sigma$ to a  matrix-valued function on $\Sigma$. The evaluation of the last at $q\in\Sigma$ is formed by the evaluations of the first at the preimages of $q$, as by columns, taken in the corresponding order.

Let $D=\sum_{P\in\Sigma} m_PP$ where $m_P\in \Z$, $m_P>0$. Let the time $t_a$ correspond to a point $P_a\in{\rm supp}\, D$, integer $m_a\ge -m_P$, and an invariant polynomial~$\chi_a$.
\begin{theorem}\label{T:BA_exist}
Given a non-special divisor $D_p$ on $\widehat\Sigma$, such that $\deg D_p={\widehat g}+d-1$ ($d$ is the dimension of the standard representation of $\g$), there exists the unique Baker--Akhieser vector-function $\widehat\psi$, and the corresponding matrix-valued function $\widehat\Psi$ satisfying the following conditions:
\begin{itemize}
\item[(BA1)]
$\widehat\Psi$ is a $G$-valued function on $\Sigma$.
\item[(BA2)]
For every $P\in{\rm supp}\, D$,
\[
\widehat\Psi(w_P)=\Psi_{P}(w_P)\cdot {\exp}\left(-\sum_{\substack{a:\, P_a=P,\\ m_a\le m_P} } t_a w^{-m_a}_P\nabla\chi_a(\Lambda(w_P))\right)
\]
in a punctured neighborhood of the point $P$, where $w_P$ is a local parameter in the neighborhood, $\Psi_P$ is a local matrix-valued function, holomorphic and invertible, $\Lambda(w_P)$ is a local notation for $\Lambda(q)$.
\item[(BA3)]
Outside $D$ ($\widehat D$, resp.), $\widehat\Psi$ ($\widehat\psi$, resp.) is meromorphic. Apart from Tyurin points, $(\widehat\psi)+D_p\ge 0$ (hence, the poles of $\widehat\psi$ are time independent).
\item[(BA4)]
At the Tyurin points
\begin{equation}\label{E:TPform}
    \widehat\Psi=(z-z_\ga)^{h_\ga}\Psi_0
\end{equation}
where $\Psi_0$ is holomorphic and holomorphically invertible in a neighborhood of $z_\ga$ (including $z_\ga$).
\end{itemize}
\end{theorem}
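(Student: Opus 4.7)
The proof will proceed in three stages: (i) construct a vector-valued Baker--Akhieser function $\widehat\psi$ on the spectral curve $\widehat\Sigma$ by the classical Krichever scheme, and assemble $\widehat\Psi$ via the direct image construction of \refSS{KTH}, yielding (BA2), (BA3) and (BA4); (ii) enforce the $G$-valuedness (BA1) by a $\s$-equivariant orthogonalization (or skew-orthogonalization) of the columns of $\widehat\Psi$, using the involution $\s:\l\mapsto-\l$ from \refL{prop_spec}$\,2^\circ$; (iii) verify that (BA4) survives stage (ii). Uniqueness will follow from the classical Riemann--Roch count on $\widehat\Sigma$ together with the non-speciality of $D_p$, once a standard normalization has been fixed.

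For stage (i), the data at each $P\in\mathrm{supp}\,D$ prescribes on the $i$-th sheet of $\widehat\Sigma\to\Sigma$ an essential singularity whose leading exponent is $-t_a w_P^{-m_a}(\nabla\chi_a(\Lambda(w_P)))_{ii}$, while the data at each $\ga\in\Gamma$ prescribes integer vanishing or pole orders at each Tyurin preimage dictated by the diagonal entries of $h_\ga$ on the standard representation. Together with the bound $(\widehat\psi)+D_p\ge 0$ away from Tyurin points, and the degree condition $\deg D_p=\widehat g+d-1$ with $D_p$ non-special, the classical Riemann--Roch argument on $\widehat\Sigma$ determines $\widehat\psi$ uniquely up to a scalar normalization, and an explicit theta-functional expression follows the recipe of \cite{Krichever_Itogi}. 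Assembling values at preimages produces $\widehat\Psi$, whose factor $\Psi_0$ in \refE{TPform} is holomorphic and invertible by a generic non-speciality condition.

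For stage (ii), let $B(\cdot,\cdot)$ be the invariant bilinear form on $V$ whose Gram matrix is $\s$. Since $\nabla\chi_a(\Lambda)\in\g$, its diagonal entries on $\s$-paired sheets are negatives of each other, so the essential singularities of $\widehat\psi(Q)$ and $\widehat\psi(\s Q)$ cancel in the pairing $B(\widehat\psi(Q),\widehat\psi(\s Q))$; consequently the inner products of $\s$-paired columns of $\widehat\Psi$ are meromorphic functions on $\Sigma$ with controlled poles. A Gram--Schmidt style orthogonalization performed sheet-pair by sheet-pair and using only previously processed columns replaces $\widehat\Psi$ by a $G$-valued matrix; the new columns still lie in the original finite-dimensional solution space, so (BA2) and (BA3) persist automatically.

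The main obstacle is stage (iii). Near $\ga\in\Gamma$, write $\widehat\Psi=(z-z_\ga)^{h_\ga}\Psi_0$; the inner product of the $i$-th and $j$-th columns then equals $\sum_{k,l}B(e_k,e_l)(\Psi_0)_{ki}(\Psi_0)_{lj}(z-z_\ga)^{h_k+h_l}$, where $h_k$ denotes the $k$-th diagonal entry of $h_\ga$. The crucial point is that $h_\ga\in\g$ acts infinitesimally skew-symmetrically (resp.\ symplectically) with respect to $B$, so $B$ pairs only weight spaces of opposite $h_\ga$-weight: $B(e_k,e_l)=0$ unless $h_k+h_l=0$. Hence every non-vanishing contribution is holomorphic at $z_\ga$, the Gram--Schmidt coefficients are holomorphic at $z_\ga$, and the orthogonalization preserves the factorization $(z-z_\ga)^{h_\ga}\Psi_0$ with $\Psi_0$ holomorphic and invertible (invertibility surviving because the Gram--Schmidt transformation is lower triangular with unit diagonal). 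This is the delicate step promised in the introduction.
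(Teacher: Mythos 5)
Your overall architecture (classical Baker--Akhieser theory on $\widehat\Sigma$, then an orthogonalization enforcing $G$-valuedness, then a check that the Tyurin structure survives) matches the paper's, but two steps do not close as written. The principal gap is in stage (i): you assert that after imposing the conditions at the Tyurin points the ``classical Riemann--Roch argument'' still produces exactly a $d$-dimensional space of scalar Baker--Akhieser functions, hence existence and uniqueness, but you never verify the count. Condition \refE{TPform} simultaneously \emph{enlarges} the admissible pole divisor (the negative diagonal entries of $h_\ga$ allow poles of total order $m\dim V$ at each $\ga$, $m$ the depth) and \emph{imposes} vanishing conditions (the Laurent coefficients of $(z-z_\ga)^{h_\ga}\Psi_0$ must lie in the filtration subspaces $F_i$), and one must show these two effects cancel exactly. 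The paper does this by computing the total codimension of the constraints $\psi_0^i\in F_i$ and showing it equals $m\dim V$, using $\dim V_i=\dim V_{-i}$ (equivalently $\tr h_\ga=0$, which is where semisimplicity enters). Note that for a \emph{single} scalar basis function the prescribed orders at the preimages of $\ga$ do not sum to zero; the cancellation only happens for the whole vector function. Without this count you can conclude neither that the space is nonzero nor that it has dimension $d$; this is precisely the non-classical content of the existence statement.

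Second, your orthogonalization acts on the \emph{columns} of $\widehat\Psi$ with respect to the form $B$ on $V$, whereas the paper orthogonalizes the basis $\psi_1,\dots,\psi_d$ of the function space ${\frak B}(D_p,{\mathbf p})$ (i.e.\ the rows) with respect to the fiberwise pairing \refE{scalar_pr}, in which every summand pairs $Q$ with $Q^\s$, so the exponential singularities always cancel, the Gram--Schmidt coefficients are meromorphic on $\Sigma$, and the operation is a left multiplication that leaves the right-hand diagonal exponential factor in (BA2) untouched. Your justification that ``(BA2) and (BA3) persist automatically'' because the new columns ``still lie in the original finite-dimensional solution space'' does not hold: a linear combination of the columns $\widehat\psi(Q_1),\dots,\widehat\psi(Q_d)$ (values of one vector function at the $d$ preimages of $q$) is not a value of any element of ${\frak B}(D_p,{\mathbf p})$, and for $k\ne\s(j)$ the pairing of columns $j$ and $k$ retains the uncancelled essential singularity $\exp\bigl(-t_a w^{-m_a}(\mu^j_a+\mu^k_a)\bigr)$ near ${\rm supp}\,D$ (in the direct problem these pairings vanish because the columns are eigenvectors of a $B$-skew operator, but in the inverse problem that is not yet available), so your Gram--Schmidt coefficients are not even meromorphic without further argument. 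That said, your stage (iii) observation --- that $B$ pairs only $h_\ga$-weight spaces of opposite weight, so the column Gram matrix is holomorphic at $z_\ga$ and a unipotent column operation preserves the factorization $(z-z_\ga)^{h_\ga}\Psi_0$ --- is correct and is a cleaner mechanism for preserving (BA4) than the order count for row coefficients carried out in the proof of \refT{exfor}; but it certifies a column operation whose admissibility you have not established.
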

\begin{proof}
Let ${\mathbf p}=\{ (Q,p_Q)\ |\ Q\in\widehat\Sigma,\ p_Q\in\C[w_Q^{-1}]  \}$ be finite ($w_Q$ denotes a local parameter in a neighborhood of $Q$), $D_p$ be a positive divisor of degree $\widehat g+d-1$ on $\widehat\Sigma$. Relying on the conventional theory of Baker--Akhieser functions \cite{Krichever_Itogi}, we will consider as proven the existence of a $d$-dimensional space ${\frak B}(D_p,{\mathbf p})$ of scalar-valued Baker--Akhieser functions on $\widehat \Sigma$ with an arbitrary ${\mathbf p}$ as a set of essential singularities, and a pole divisor $D_p$, where for a $(Q,p_Q)\in{\mathbf p}$ we consider $Q$ as a point of essential singularity, and $p_Q$ as a main part of the exponent at $Q$.

For an arbitrary time $t_a$ we can take $Q\in\pi^{-1}(P_a)$, and $p_Q(w_a)=w_a^{-m_a}\nabla\chi_a(\Lambda(w_a))$, and apply the above statement to the set ${\mathbf p}$ obtained this way. Then functions in the space ${\frak B}(D_p,{\mathbf p})$ have the exponents as claimed in (BA2).

However, we claim existence and uniqueness of the Baker--Akhieser function with an additional requirement (BA4). We will show that the dimension of the space of  modified this way Baker--Akhieser functions does not change.

Let $V\simeq\C^d$ and $V_i=\{ v\in V \ |\ h_\ga v=iv \}$. Obviously,
\begin{equation}\label{E:vgr}
    V=\bigoplus_{i=-m}^m V_i.
\end{equation}

Define a flag $F\, :\, \{ 0\}\subseteq F_{-m}\subseteq\ldots\subseteq F_m=V$ by setting
\begin{equation}\label{E:hflag}
   F_j=\bigoplus_{i=-m}^j V_i.
\end{equation}
Then $F_m=V$. We set $F_i=V$ also for $i>m$. It has been shown in course of the proof of \cite[Lemma 3.4]{Sh_MMO} that for a (locally) holomorphic vector function $\psi_0$ the following expansion takes place:
\[
   z^{h_\ga}\psi_0=\sum_{i=-m}^\infty \psi_0^i z^i
\]
where $\psi_0^i\in F_i$ for all $i\ge -m$. On the one hand, by Riemann--Roch theorem, the contribution of the pole at any $\ga$ to the  dimension of such functions should be equal to $m\dim V$. On the other hand, the total codimension of the relations $\psi_0^i\in F_i$ is also equal to $m\dim V$ which easy follows from the equality $\dim V_i=\dim V_{-i}$. Therefore, the contribution is actually equal to zero. Applying this argument to any column of the matrix-function $z^{h_\ga}\Psi_0$ we obtain that the contribution of this matrix-function into the dimension of the space of Baker--Akhieser functions is equal to $0$. Therefore adding the condition (BA4) leaves the conventional dimension of the space of Baker--Akhieser functions invariant.

By the above procedure of generation of vector-valued and matrix-valued Baker--Akhieser functions, given an arbitrary base $\psi_1,\ldots,\psi_d$ in ${\frak B}(D_p,{\mathbf a})$ we can form a vector-function $\psi(Q)=(\psi_1(Q),\ldots,\psi_d(Q))^T$ on $\widehat\Sigma$. The upper $T$ denotes transposition here. Further on, given a point $q\in\Sigma$, we consider the matrix $\widehat\Psi(q)$ formed by the vector-functions $\psi(Q_j)$ as columns where $j=1,\ldots,d$, $Q_j=(q,\l_j)$, $\l_1,\ldots,\l_d$ are the roots of the spectral curve equation over $q$. $\widehat\Psi$ depends on the order of roots of the spectral equation, but apart that, it is unique up to an appropriate normalization.

It is our next modification that we take $\widehat\Psi(q)\in G$ for all $q\in\Sigma$. With this aim, we first define a bilinear form on the space
${\frak B}(\widehat D,{\mathbf p})$. For $\g=\so(2n),\so(2n+1)$, and  $\psi_1,\psi_2\in{\frak B}(D_p,{\mathbf p})$, we set
\begin{equation}\label{E:scalar_pr}
  (\psi_1,\psi_2)(q)= \frac{1}{2}\sum_{Q:\,\pi(Q)=q}\psi_1(Q)\psi_2(Q^\s)
\end{equation}
where $\pi\, :\widehat\Sigma\to\Sigma$ is the covering map, $q\in\Sigma$, $Q\in\widehat\Sigma$, and $\s\, :\, (q,\l)\to(q,-\l)$ is the involution (where $\l$, $-\l$ is a pair of opposite roots of the characteristic equation $R(\l,q)=0$). Let $P$ be an exponential point of $\widehat\Psi$ (see (BA2)), and $Q$ is in a small enough neighborhood of $P$. Then $\Lambda(Q)=-\Lambda(Q^\s)$ by definition of $\s$. Since $\chi$ is $\s$-invariant, $\nabla\chi$ is $\s$-antiinvariant, i.e. $\nabla\chi(\Lambda(Q))=-\nabla\chi(\Lambda(Q^\s))$. Hence every singularity of exponential type turns to its inverse under the involution $\s$. The same holds for the Tyurin-type poles, because the sum of every two diagonal elements of $h_\ga$ which are permuted by $\s$, vanishes, by definition of a Cartan subalgebra. Hence the just defined point-wise scalar product of Baker--Akhieser vector-functions is a meromorphic function (holomorphic at the just listed points). Choose the above base $\psi_1(Q),\ldots,\psi_d(Q)$ to be orthogonal with respect to the scalar product \refE{scalar_pr}. Then the columns of the corresponding matrix $\widehat\Psi$ will be orthogonal with respect to the scalar product $(x,y)=x\s y^T$ where $x,y\in\C^d$ are considered as rows, $T$ denotes transposition, and, in abuse of notation, $\s$ here denotes an operator in $\C^d$ permuting the standard basis $e_1,\ldots,e_d\in\C^d$ according to the involution $\s$ on the sheets of the covering $\widehat\Sigma\to\Sigma$, namely $e_1,\ldots,e_d \to e_{\s(1)},\ldots,e_{\s(d)}$. Observe that $L$ is considered to be skew-symmetric matrix exactly with respect to this scalar product. Finally, we normalize $\psi_1,\ldots,\psi_d$ so that the matrix $\widehat\Psi\in SO(d)$.

The argument is the same in the symplectic case except that the scalar product \refE{scalar_pr} is replaced with the symplectic form  :
\begin{equation}\label{E:symp_f}
  \langle\psi_1,\psi_2\rangle(q)= \frac{1}{2}\sum_{Q:\,\pi(Q)=q}(\psi_1(Q)\psi_2(Q^\s)-\psi_1(Q^\s)\psi_2(Q)).
\end{equation}
\end{proof}
\begin{theorem}\label{T:invsptr}
Let $ \widehat\Psi$ be given by \refT{BA_exist}, and $h_{\ga(t)}=g(t,z)hg(t,z)^{-1}$ where $h\in\h$ is integral, $g$, $g^{-1}$ and $\dot g$ are holomorphic, $g(t)\in G$ for all $t$. Then
\begin{itemize}
\item[$1^\circ$.]
$L$ and $M_a$ given by \refE{LM} with $t=t_a$ provide a solution of the Lax equation with the relation \refE{sravn} between $L$ and $M_a$.
\item[$2^\circ$.]
$L\in\L_{\Pi,\Gamma,h}^D$, and $M_a\in\M_{\Pi,\Gamma,h}^D$, as defined in  \refSS{Lax_eq}, in particular $\left( L\big|_{\Sigma_D}\right) +D\ge 0$.
\end{itemize}
\end{theorem}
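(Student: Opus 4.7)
The plan is to verify each of the two claims by local analysis near the three kinds of distinguished points on $\Sigma$, combining the four defining properties (BA1)--(BA4) of $\widehat\Psi$ from \refT{BA_exist}. At the essential singularities $\Pi$ I will use the exponential asymptotic (BA2); globally I will use the $G$-valuedness (BA1) to get the Lie algebra-valued form of $L$; at the Tyurin points $\Gamma$ I will use (BA4) together with the hypothesis $h_{\ga(t)} = g h g^{-1}$ to extract the local form of $L$ and, more delicately, of $M_a$.

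For $1^\circ$, the Lax equation is a direct computation: since the spectral curve, and hence $\Lambda$, is independent of $t_a$, differentiating $L = \widehat\Psi\Lambda\widehat\Psi^{-1}$ in $t_a$ and using $\partial_{t_a}\widehat\Psi = -M_a\widehat\Psi$ yields $\partial_{t_a}L = [L,M_a]$. For the asymptotic \refE{sravn} near $P_a$ I would substitute (BA2) into $L$; the exponential factor is a polynomial in $\Lambda$, commutes with $\Lambda$ and so cancels, giving $L = \Psi_P \Lambda \Psi_P^{-1}$ locally. Differentiating (BA2) in $t_a$ then produces $M_a(w_a) = w_a^{-m_a}\Psi_P\nabla\chi_a(\Lambda)\Psi_P^{-1} + O(1)$, and equivariance of $\nabla\chi_a$ under conjugation converts this to $w_a^{-m_a}\nabla\chi_a(L(w_a)) + O(1)$.

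For the $L$-part of $2^\circ$, the containment $L \in \g$ follows from (BA1): ordering the sheets of the covering $\widehat\Sigma \to \Sigma$ so that the covering involution $\s$ pairs $\l \leftrightarrow -\l$, the induced permutation matrix, also denoted $\s$, satisfies $\s\Lambda\s^{-1} = -\Lambda$; the orthogonality relation $\widehat\Psi^T\s\widehat\Psi = \s$ (or its symplectic analog) then forces $\s L^T\s^{-1} = -L$, so $L$ is skew-symmetric (resp.\ infinitesimal symplectic). The bound $(L)+D \ge 0$ away from $\Gamma$ is the standard Krichever cancellation: by (BA3) the poles of $\widehat\Psi$ at $D_p$ are time-independent and lie along one-dimensional (regular) eigenspaces, so they cancel in the product $\widehat\Psi\Lambda\widehat\Psi^{-1}$. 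The Tyurin-type expansion of $L$ at each $z_\ga$ follows directly from (BA4): $L = (z-z_\ga)^{h_\ga}(\Psi_0\Lambda\Psi_0^{-1})(z-z_\ga)^{-h_\ga}$ with $\Psi_0\Lambda\Psi_0^{-1}$ holomorphic at $z_\ga$ (since $\Psi_0$ is holomorphically invertible there and $\Gamma$ is disjoint from $\supp D$), so \refL{lform} places $L$ in $\L_{\Pi,\Gamma,h}^D$.

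The main technical obstacle is the Tyurin structure of $M_a$, i.e.\ verifying the expansion \refE{M_oper}. Here the hypothesis $h_{\ga(t)} = g(t,z) h g(t,z)^{-1}$ with $g, g^{-1}, \dot g$ holomorphic, $h \in \h$ integral, is essential: it allows me to rewrite $(z-z_\ga)^{h_\ga} = g(z-z_\ga)^h g^{-1}$, so the differentiation along the flow takes place within curves of semi-simple conjugations and, since $h$ commutes with itself, $\partial_t[(z-z_\ga)^h] = -\dot z_\ga (z-z_\ga)^{-1} h (z-z_\ga)^h$. Substituting $\widehat\Psi = g(z-z_\ga(t))^h g^{-1}\Psi_0$ into $M_a = -\partial_{t_a}\widehat\Psi\cdot\widehat\Psi^{-1}$, the singular contribution is $\dot z_\ga\, h_\ga/(z-z_\ga)$, matching the leading Tyurin term of \refE{M_oper} with $\nu_\ga = \dot z_\ga$; this is exactly the first equation of motion of Tyurin parameters \refE{mov_tyur}. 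The remaining contributions are $-\dot g g^{-1}$, which is $O(1)$, and $-g(z-z_\ga)^h A(z,t)(z-z_\ga)^{-h}g^{-1}$ with $A$ holomorphic at $z_\ga$. Expanding $A = \sum_{j\ge 0,\,p}(z-z_\ga)^j A_j^p$ along the grading $\g = \bigoplus_p \g_p$ induced by $\ad h$, the coefficient of $(z-z_\ga)^i$ in this last term is $\sum_{p+j=i,\,j\ge 0} g A_j^p g^{-1}$; since $\g_p^\ga = g\g_p g^{-1}$ and $j \ge 0$ forces $p = i - j \le i$, each summand lies in $\tilde\g_i^\ga$, so the negative-order coefficients $M_i^\ga$ satisfy $M_i^\ga \in \tilde\g_i^\ga$ as required. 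Holomorphy of $M_a$ outside $\Pi \cup \Gamma$ is immediate from (BA3) and the time-independence of the non-Tyurin poles of $\widehat\Psi$.
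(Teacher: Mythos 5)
Your proposal is correct and follows essentially the same route as the paper's proof: the Lax equation by differentiating $L=\widehat\Psi\Lambda\widehat\Psi^{-1}$, the relation \refE{sravn} and the bound $(L|_{\Sigma_D})+D\ge 0$ by substituting (BA2) and using that the exponential factor commutes with $\Lambda$, and the Tyurin expansions of $L$ and $M_a$ via (BA4) together with the substitution $(z-z_\ga)^{h_\ga}=g(z-z_\ga)^hg^{-1}$. The only (harmless) deviations are that you derive $\g$-valuedness of $L$ from the orthogonality relation $\widehat\Psi^T\s\widehat\Psi=\s$ rather than directly from (BA1), and that you unwind the filtration condition $M_i^\ga\in\tilde\g_i^\ga$ by expanding along the $\ad h$-grading where the paper simply invokes \refL{lform}.
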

\begin{proof}
By (BA2) $\Lambda$ is a $\h$-valued function where $\h$ is a diagonal (Cartan) subalgebra in~$\g$.  Since by (BA1) $\widehat\Psi(q)\in G\cdot\frak D$, and $\frak D$ commutes with $\h$, $L=\widehat\Psi\Lambda{\widehat\Psi}^{-1}$ is a $\g$-valued function on $\Sigma$. The relation $M=-\partial_t\widehat\Psi\cdot{\widehat\Psi}^{-1}$ obviously implies the same for~$M$. By $t$ we denote any of the times $t_a$.

By differentiating the both sides of the relation $L=\widehat\Psi\Lambda{\widehat\Psi}^{-1}$, taking account of time independence of $\Lambda$, we obtain
\[
   \dot L=\partial\widehat\Psi\cdot\Lambda{\widehat\Psi}^{-1}- \widehat\Psi\Lambda{\widehat\Psi^{-1}}\partial\widehat\Psi\cdot{\widehat\Psi}^{-1}= [\Psi\Lambda\Psi^{-1},-\partial\Psi\cdot\Psi^{-1}]=[L,M].
\]

At a Turin point $z_\ga$, by \refE{TPform} and $L=\widehat\Psi\Lambda\widehat\Psi^{-1}$, we have
\[
   L=(z-z_\ga)^{h_\ga}(\Psi_0\Lambda\Psi_0^{-1})(z-z_\ga)^{-h_\ga}
\]
where $\Psi_0\Lambda\Psi_0^{-1}$ is holomorphic. By \refL{lform}, and for the reason that $\Psi_0\Lambda\Psi_0^{-1}$ is holomorphic, $L$ has a Laurent expansion \refE{ga_expan} at $z=z_\ga$.

Since the exponential factor in (BA2) commutes with $\Lambda$, we have $L=\Psi_P\Lambda\Psi_P^{-1}$ on $\Sigma_D$ where $\Psi_P$ is holomorphic and holomorphically invertible. Hence, taking account of \refE{lower_b} we obtain
\[
 \left(L\big|_{\Sigma_D}\right)=\left(\Lambda\big|_{\Sigma_D}\right)\ge -D.
\]
Next, consider the expression for the $M$-operator at a Tyurin point $\ga$. By differentiation of both parts of (BA4) we obtain
\[
   \partial_t\widehat\Psi\cdot{\widehat\Psi}^{-1}= \partial_t(z-z_\ga)^{h_\ga}\cdot (z-z_\ga)^{-h_\ga}+(z-z_\ga)^{h_\ga}(\partial_t\Psi_0\cdot\Psi_0^{-1})(z-z_\ga)^{-h_\ga}.
\]
To calculate $\partial_t(z-z_\ga)^{h_\ga}\cdot (z-z_\ga)^{-h_\ga}$ we plug $(z-z_\ga)^{h_\ga}=g(z-z_\ga)^hg^{-1}$ where $h$ is time-independent. Then
\[
  \partial_t(z-z_\ga)^{h_\ga}\cdot (z-z_\ga)^{-h_\ga}={\dot g}g^{-1}+ \frac{\nu h_\ga}{z-z_\ga} - (z-z_\ga)^{h_\ga}{\dot g}g^{-1}(z-z_\ga)^{-h_\ga}
\]
where $\nu=\partial_tz_\ga$. We obtain
\begin{equation}\label{E:M-type}
  M={\dot g}g^{-1}+\frac{\nu h_\ga}{z-z_\ga}-(z-z_\ga)^{h_\ga}({\dot g}g^{-1}+\partial_t\Psi_0\cdot \Psi_0^{-1})(z-z_\ga)^{-h_\ga}.
\end{equation}
By assumption, ${\dot g}g^{-1}+\partial_t\Psi_0\cdot \Psi_0^{-1}$ is holomorphic. Therefore by \refL{lform}, the third summand in \refE{M-type} has an $L$-type Laurent expansion. The first occurence of ${\dot g}g^{-1}$ in \refE{M-type} we replace with $O(1)$. We conclude that the the right hand side of \refE{M-type} has an expansion of the type \refE{M_oper}.

\item[$3^\circ$.]
In the neighborhood of an exponential point $P$, plugging (BA2) to the relation $M_a=-\partial_{t_a}\widehat\Psi\cdot{\widehat\Psi}^{-1}$ we obtain
\[
   M_a=-\partial_t{\Psi_P}\cdot{\Psi_P}^{-1}+{\Psi_P}(w^{-m}\nabla\chi(\Lambda)+ O(1)){\Psi_P}^{-1}
\]
Since $\Psi_P$ is holomorphic at $P$, we have
\[
   M_a={\Psi_P}(w^{-m}\nabla\chi(\Lambda)){\Psi_P}^{-1}+ O(1).
\]
By $\Ad$-invariance of the polynomial $\chi$ it follows that ${\Psi}(w^{-m}\nabla\chi(\Lambda)){\Psi}^{-1}=w^{-m}\nabla\chi(L)$ which implies the relation \refE{sravn} between $L$ and $M_a$.
\end{proof}
\subsection{Theta-functional formula for the Baker--Akhieser matrix-function}\label{SS:BAexpl}
The aim of the section is to appropriately modify a conventional Krichever-type theta-functional formula for the Baker--Akhieser function (for a Bloch functions such kind of relation has been introduced by A.Its). By \refE{TPform}, it follows that at any Tyurin point $\ga$, all entries of the $i$th row of the matrix $\widehat\Psi$ have the same order $z^{h_\ga^i}$ where $h_\ga^i$ is the $i$th diagonal element of $h_\ga$. We assign these orders to the $i$th basis element of ${\frak B}(D_p,{\mathbf p})$ by modifying the 1-form involved in such kind of formulas. The last  will be no longer the same for all rows.

For any $P\in{\rm supp}\, D$ let $\{ P_j\ |\ j=1,\ldots,d\}$ be its full preimage in $\widehat\Sigma$. For any invariant polynomial $\chi_a$ let $\mu_a^j(w_{P_j})$ denote     the $j$th diagonal element of $\nabla\chi_a(\Lambda(w_P))$ (which is nothing but the $j$th eigenvalue of $M_a$), where $w_{P_j}$ is a local coordinate in the neighborhood of $P_j$ ($w_{P_j}=w_P$ if $P_j$ is not a branch point of the spectral curve, i.e. in a general position).
\begin{theorem}\label{T:exfor}
Let $\Omega_i$ be the unique 1-form on $\Sigma$ such that
\begin {itemize}
\item[$1^\circ$.] In a neighborhood of any $P_j$
\[
    \Omega_i(w_{P_j})=\left(-\sum_{\substack{a:\, P_a=P,\\ m_a\le m_P} }
    t_a w^{-m_a}_{P_j}\mu_a^j(w_{P_j})+O(1)\right)dw_{P_j};
\]
\item[$2^\circ$.]
In a neighborhood of any Tyrin point $\ga$
\[
    \Omega_i(z)= \left(\frac{h_\ga^i}{z-z_\ga}+O(1)\right)dz.
\]

\item[$3^\circ$.]
$\Omega_i$ is holomorphic outside the above points, and its $a$-periods are equal to zero.
\end{itemize}
Then the Baker--Akhieser matrix
\begin{equation}\label{E:BA_semi}
   {\widehat\Psi}_{ij}(q)=\left(\exp\int\limits^{(q,\l_j)}\Omega_i\right)\frac{\theta({\rm A}(Q)+Z(D_i)+U)}{\theta({\rm A}(Q)+Z(D_i))},\ i=0,\ldots,n-1,
\end{equation}
subjected to the further orthogonalization (skew-orthogonalization, resp.) procedure introduced above, satisfies the conditions of \refT{BA_exist}.
\end{theorem}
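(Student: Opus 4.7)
The formula \refE{BA_semi} is a Krichever-type theta-functional ansatz adapted to the presence of Tyurin points: the differential $\Omega_i$ carries row-dependent residues $h_\ga^i$ at the points of $\Ga$ to produce the grading appearing in (BA4). The strategy is to verify that this ansatz satisfies (BA2)--(BA4) of \refT{BA_exist} by inspection, and then invoke the (skew-)orthogonalization from the proof of \refT{BA_exist} to obtain (BA1), checking that this last step preserves (BA4).

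First, I would establish existence and uniqueness of $\Omega_i$. With the prescribed polar parts at $\operatorname{supp} D$ and at the Tyurin points, and with vanishing $a$-periods, $\Omega_i$ is determined uniquely provided the residue theorem is satisfied. The sum of residues---coming from the Tyurin points (contributing $h_\ga^i$) and from those preimages $P_j$ of $P\in\operatorname{supp} D$ corresponding to $m_a=1$---must vanish. This is the natural compatibility condition on the spectral and Tyurin data; under the class of Lax pairs considered it follows from $h_\ga\in\h$ being diagonal and from $\sum_j\mu_a^j$ being a symmetric function of the eigenvalues, hence descending to a meromorphic object on $\Sigma$ whose contribution cancels against the Tyurin residues.

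Next, I would verify (BA2)--(BA4) for \refE{BA_semi}. Near an exponential point $P_j$, integrating the principal part of $\Omega_i$ and exponentiating reproduces the essential singularity demanded by (BA2), while the holomorphic part of $\exp\int\Omega_i$ and the theta ratio together give the locally holomorphic and invertible factor $\Psi_P$. Away from singularities, (BA3) follows from the standard cancellation between the quasi-periodicity of $\exp\int\Omega_i$ along $b$-cycles and that of the theta ratio, fixed by the shift $U$; the poles of $\widehat\Psi_{ij}$ are supplied by the zeros of $\theta(\mathrm{A}(Q)+Z(D_i))$ and lie on $D_p$. At a Tyurin point $\ga$, the residue $h_\ga^i$ of $\Omega_i$ contributes the factor $(z-z_\ga)^{h_\ga^i}$ to the $i$-th row, so the matrix factorizes as $(z-z_\ga)^{h_\ga}\Psi_0$ with $\Psi_0$ locally holomorphic, and holomorphically invertible in general position, establishing (BA4).

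Finally, to obtain (BA1) I would apply the orthogonalization (resp.\ skew-orthogonalization) with respect to the bilinear form \refE{scalar_pr} (resp.\ \refE{symp_f}) as in the proof of \refT{BA_exist}. The main obstacle, highlighted in the introduction, is that this orthogonalization must preserve (BA4): replacing columns by linear combinations could in principle destroy the prescribed Tyurin-type factorization. The key input is the dimension-counting argument via the flag \refE{hflag}, using $\dim V_i=\dim V_{-i}$, which shows that the condition (BA4) imposes no net constraint on the dimension of the space of Baker--Akhieser functions, and hence that an orthogonal basis respecting it exists. Uniqueness in \refT{BA_exist} then identifies the (skew-)orthogonalized matrix built from \refE{BA_semi} with the Baker--Akhieser matrix supplied there.
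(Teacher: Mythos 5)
Your overall architecture matches the paper's: verify (BA2)--(BA4) for the theta-functional ansatz by the conventional Krichever argument, then apply the (skew-)orthogonalization to obtain (BA1), and check that this last step does not destroy (BA4). The difficulty you correctly single out --- that mixing the rows $\psi_1,\dots,\psi_d$ by Gram--Schmidt could spoil the prescribed row-wise orders $h_\ga^i$ at a Tyurin point --- is exactly what the introduction calls the most delicate part of the proof. But the tool you invoke for it does not do the job. The flag/dimension count via $\dim V_i=\dim V_{-i}$ (from the proof of \refT{BA_exist}) shows only that imposing (BA4) leaves the dimension of the space of Baker--Akhieser functions unchanged, i.e.\ that \emph{some} matrix satisfying (BA2)--(BA4) exists. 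It says nothing about whether, inside the resulting flag of row-spaces (row $i$ must have order $\ge h_\ga^i$ at the preimages of $\ga$, a condition that differs from row to row), one can choose rows that are simultaneously orthogonal for \refE{scalar_pr} (resp.\ \refE{symp_f}) and still realize the exact orders needed for $\Psi_0$ in \refE{TPform} to be holomorphically invertible. The Gram--Schmidt coefficients are meromorphic functions of $q$, and a priori they could have poles at $z_\ga$ that push the order of the corrected row below $h_\ga^{k+1}$; "no net constraint on the dimension" does not exclude this.

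What the paper actually does at this point is an explicit local estimate: writing $e_i\sim z^{h_i}$ for the rows near $z_\ga$ and solving $(\l_1e_1+\dots+\l_ke_k+e_{k+1},e_j)=0$ by Cramer's rule, one gets $\det[(e_i,e_j)]_{i,j\le k}\sim z^{2(h_1+\dots+h_k)}$ for the Gram determinant and $\sim z^{2(h_1+\dots+h_k)-h_s+h_{k+1}}$ for the numerator, whence $\l_s\sim z^{h_{k+1}-h_s}$ and every term $\l_se_s$ has the same order $z^{h_{k+1}}$ as $e_{k+1}$ itself. This is the missing estimate; without it the claim that orthogonalization preserves (BA4) is unsupported. (Your residue-theorem check for the existence of $\Omega_i$, your verification of (BA2)--(BA3) for the ansatz, and the observation that orthogonalization of rows leaves the column-wise exponential singularities intact are all fine and consistent with the paper, which delegates these points to the conventional $GL(n)$ theory.)
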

\begin{proof}
It is conventional that the $\g$-valued functions given by the relations \refE{LM}, \refE{BA_semi} are well-defined on $\Sigma$ whatever be the differential $\Omega$ satisfying the conditions $1^\circ$, $3^\circ$. It can be proven exactly as in the case $G=GL(n)$ \cite{Krichever_Itogi}.

In presence of involution, since the divisor $D$ is invariant with respect to it, the procedure of orthogonalization (skew-orthogonalization, resp.) can be applied. The problem is to fulfill the conditions (BA2) and, especially, (BA4) in the process. For instance, the condition (BA4) means that the $i$th row of $\widehat\Psi$ has the order $h_\ga^i$ at $\ga$ where $h_\ga=\{ h_\ga^1,\ldots,h_\ga^d \}$. We will show that these orders are preserved in the process of orthogonalization (skew-orthogonalization, resp.), hence (BA4) stays to be fulfilled.

Indeed, let vector-functions $e_1,\ldots,e_k,e_{k+1}$ have the orders $h_1,\ldots,h_k,h_{k+1}$ at a point, say, $z=0$. Thus, locally, $e_i\sim z^{h_i}$, $i=1,\ldots, k+1$. On the $(k+1)$th step of the orthogonalization process we are looking for $\l_1,\ldots,\l_k$ satisfying the equations
\[
     (\l_1e_1+\ldots+\l_ke_k+e_{k+1},e_j)=0,\ j=1,\ldots, k+1.
\]
Then $\l_s=(-1)^{k-s}\det[(e_i,e_j)]_{i=1,\ldots,k}^{j=1,\ldots,{\widehat s},\ldots,k,k+1}/\det[(e_i,e_j)]_{i=1,\ldots,k}^{j=1,\ldots,k}$. Obviously,
\[
\det[(e_i,e_j)]_{i=1,\ldots,k}^{j=1,\ldots,k}\sim z^{2(h_1+\ldots+h_k)},\ \det[(e_i,e_j)]_{i=1,\ldots,k}^{j=1,\ldots,{\widehat s},\ldots,k,k+1}\sim z^{2(h_1+\ldots+h_k)-h_s+h_{k+1}},
\]
hence $\l_s\sim z^{h_{k+1}-h_s}$. Since $e_s\sim z^{h_s}$, all terms of the linear combination $\l_1e_1+\ldots+\l_ke_k+e_{k+1}$ are of the same order $z^{h_{k+1}}$.

As for (BA2), it will be preserved for the reason that the columns stay untouched in the process of orthogonalization of rows.

The resulting Baker--Akhieser matrix satisfies the conditions of \refT{invsptr}.

\end{proof}

\subsection{Explicit expression for the exponents}
Here we would like to give an explicit expression (the relation \refE{eigM1} below) for $\mu_a^j(w_{P_j})$ in \refT{exfor}.

Recall that $\mu_a^j(w_{P_j})$ is the $j$th diagonal element of $\nabla\chi_a(\Lambda(w_P))$, as introduced  in \refSS{BAexpl}. It gives the main part of the differential $\Omega_i$ at the point $P_j$, as formulated in \refT{exfor},$1^\circ$. Thus, it will be our first step to give an explicit expression for $\nabla\chi(\Lambda)$ where $\chi$ is a basis invariant of the Lie algebra $\g$.

\begin{example}
Let $\chi(L)=\tr\, L^{n+1}$. Then $d\chi(L)=\tr\, dL^{n+1}=(n+1)\tr\, L^ndL$, hence $\nabla\chi(L)=(n+1)L^n$.
\end{example}
\noindent This example reproduces the expression for the main part of the $M$-operator in~\cite{Klax}.
\begin{example}\label{Ex:det}
Let $\chi(L)=\det L$. Then $(\nabla\chi)(L)=(\det L)L^{-1}$.
\begin{proof}[Proof 1]
By differentiating determinant as a polylinear functional of its columns, we obtain: $d\det L=\sum_{ij}A_{ij}dL_{ij}$ where $A_{ij}$ is a cofactor of the  $(i,j)$th element of the matrix $L$. Let $A=(A_{ij})$. Then the previous relation can be written down as $d\det L=\tr(A^TdL)$ ($A^T$ being the transposed matrix for $A$). It is conventional that $A^T=(\det L)L^{-1}$, which implies the statement.
\end{proof}
\begin{proof}[Proof 2]
We start with $\det L=\exp\tr\log L$. Then
\[
 d\det L=(\exp\tr\log L)d(\tr\log L)=(\det L)\tr\, d\log L
 =(\det L)\tr\,(L^{-1}dL).
\]
Finally, $d\det L=\tr\,\left((\det L)L^{-1}\cdot dL\right)$.
\end{proof}
\end{example}
\refEx{det} possesses a certain universality, namely it suggests the way to calculate gradients of \emph{all} coefficients of the characteristic polynomial. To be more precise, we will calculate their generating function. On the one hand, such generating function is given by the gradient of the characteristic polynomial itself:
\[
  \nabla\det(\l-L)=\sum_{i=1}^l \l^{n-i}\nabla r_i(L).
\]
On the other hand
\[
  \nabla\det(\l-L)=\det(\l-L)(\l-L)^{-1}.
\]
Consider the characteristic equation in a most general form
\[
        R(q,\l)=\l^{d}+\sum_{j=1}^n \l^{d-d_j}r_j(q)=0
\]
where $d_j$ is a degree of $r_j$ as of a polynomial in $L$.

Assuming $r_0\equiv 1$, $\det L\ne 0$, and taking $L^{-1}$ off, we obtain
\begin{align*}
  \det(\l-L)(\l-L)^{-1}&=-\left( \sum_{j=0}^n \l^{n-d_j}r_j(L)  \right) (1-\l L^{-1})^{-1}L^{-1}\\
  &=-\left( \sum_{j=0}^n \l^{n-d_j}r_j(L)  \right)\left(\sum_{p=0}^\infty (-1)^p{\l^p}L^{-p}\right)L^{-1}.
\end{align*}
Finally, we obtain
\[
  \det(\l-L)(\l-L)^{-1}= \sum_{j=0}^n\sum_{p=0}^\infty (-1)^{p+1}\l^{n-d_j+p}r_j(L)L^{-p-1},
\]
which implies
\[
  \nabla r_i(L)=\sum_{\substack{d_j-p=d_i\\
j\le n,\ p\ge 0 }} (-1)^{p+1} r_j(L)L^{-p-1}.
\]
For a simple Lie algebra this relation writes as
\[
  \nabla r_i(L)=\sum_{\substack{d_j-p=d_i\\
j\le n,\ p\ge 0 }} (-1)^{p+1} r_j(L)L^{-p-1}.
\]
Here, $p=d_j-d_i$, and $p\ge 0$ is equal to $j\ge i$. Therefore, changing the summation limits, we write down the expression for the gradient as follows:
\begin{equation}\label{E:var}
  \nabla r_i(L)=\sum_{i\le j\le n} (-1)^{d_j-d_i+1} r_j(L)L^{d_i-d_j-1}.
\end{equation}
The last is a relation between diagonal matrix-valued functions on $\Sigma$. For the induced scalar-valued functions on the spectral curve we have
\begin{equation}\label{E:eigM}
      \nabla r_i(q,\l)=\sum_{i\le j\le n} (-1)^{d_j-d_i+1} r_j(q)\l^{d_i-d_j-1}.
\end{equation}
This implies the following expression for $\mu_a^j$:
\begin{equation}\label{E:eigM1}
    \mu_a^j(w_{P_j})=\sum_{i\le j\le n} (-1)^{d_j-d_i+1} r_j(w_{P_j})\l^{d_i-d_j-1}_j.
\end{equation}
In the case $\g=\so(2n)$, the $r_n(L)=\det L$ is not a basis spectral invariant, and what we need is a gradient of the Pfaffian. For the last, from $\nabla {\rm Pf}(L)^2=(\det L)L^{-1}$ we derive
\[
  \nabla {\rm Pf}(L)=\frac{1}{2}{\rm Pf}(L)L^{-1}
\]
which implies
\[
  \mu_a^j(w_{P_j})=\frac{1}{2}{\rm Pf}(L(w_{P_j}))\l_j^{-1}
\]
for the times corresponding to the Hamiltonians related to the Pfaffian.

In the case $\g=\so(2n+1)$ $L$ is always degenerate. However, the above calculation can be carried out after excluding the characteristic root $\l=0$ (dividing the both parts of the spectral equation by $\l$) which means that we consider only the non-trivial irreducible component of the spectral curve.
\begin{remark}
In the case the base curve is hyperelliptic (given in the form $y^2=P_{2g+1}(x)$), and $P_j$ runs over the preimages of $\infty$, the asimptotic behavior of $x$, $y$, $\l$ at every $P_j$, as well as explicit expressions for $r_i(x,y)$ (for all $i$), have been explicitly calculated in \cite{Sh_FAN_2019,Sh_Bor}. This enables one to explicitly express $\Omega_i$ in terms of the basis holomorphic Prym differentials which are also known.
\end{remark}
\bibliographystyle{amsalpha}

\begin{thebibliography}{A}
\define\PL{Phys. Lett. B}
\define\NP{Nucl. Phys. B}
\define\LMP{Lett. Math. Phys. }
\define\JGP{JGP}
\redefine\CMP{Commun. Math. Phys. }
\define\JMP{J.  Math. Phys. }
\define\Izv{Math. USSR Izv. }
\define\FA{Funct. Anal. and Appl.}
\def\Pnas{Proc. Natl. Acad. Sci. USA}
\def\PAMS{Proc. Amer. Math. Soc.}
\def\UMNr{Russ. Math. Surv.}

\bibitem{Sh_Bor}
Borisova, P.I., Sheinman, O.K. \emph{Hitchin systems on hyperelliptic curves}. Proceedings of Steklov mathematical institute, Vol. 311 (2020) (to be printed); 	 arXiv:1912.06849 [math-ph].

\bibitem{Kr_UMN1977}
I. M. Krichever. \emph{Methods of algebraic geometry in the theory of non-linear equations}. Russian Math. Surveys, 32:6 (1977), 185-213.

\bibitem{Krichever_Itogi}
Krichever, I.M. \emph{Nonlinear equations and elliptic curves.} J. Math. Sci. 28, 51–90 (1985). https://doi.org/10.1007/BF02104896.

\bibitem{Klax}
Krichever, I.M. \emph{Vector bundles and Lax equations on
algebraic curves}. Comm. Math. Phys. Vol. 229, 229--269 (2002).

\bibitem{KN_UMN1980}
Krichever, I.M., Novikov, S.P. \emph{Holomorphic bundles over algebraic curves and non-linear equations}. Russian Math. Surveys, 35:6 (1980), 53-79.

\bibitem{Kr_Sh_FAN}
Krichever, I.M., Sheinman, O.K. \emph{Lax Operator Algebras}. Funct. Anal. Appl., 41:4 (2007), 284-294; 	arXiv:math/0701648 [math.RT].

\bibitem{Sh_DGr}
O.K. Sheinman.   \emph{Current algebras on Riemann surfaces}. De Gruyter Expositions in Mathematics, 58, Walter de Gruyter GmbH, Berlin--Boston, 2012, ISBN: 978-3-11-026452-4, 150 p.

\bibitem{Shein_UMN2016}
O.K.Sheinman. \emph{Lax operator algebras and integrable systems}, Russian Math. Surveys, 71:1 (2016), 109--156; arXiv:1602.04320

\bibitem{Sh_TG2016}
O.K. Sheinman, “Lax operator algebras and gradings on semisimple Lie algebras”, Transform. Groups, 21:1 (2016), 181--196; 	arXiv:1406.5017 [math.RA].

\bibitem{Sh_FAN_2019}
Sheinman, O.K. \emph{Spectral curves of hyperelliptic Hitchin systems}. Funct. Anal. Appl., 53:4 (2019), 291–303; arXiv:1806.10178 [math-ph].

\bibitem{Sh_MMO}
O.K.Sheinman. \emph{Matrix divisors on Riemann surfaces and Lax operator algebras}. Trans. Moscow Math. Soc., 2017, P. 109–121.
http://dx.doi.org/10.1090/mosc/267; arXiv:1701.01807 [math.AG].

\end{thebibliography}

\end{document}